\documentclass[a4paper,11pt]{article}

\usepackage{amsmath,amsthm,wasysym,amssymb,amsfonts,cite,enumerate,rotating,graphicx}

\def\d{\mathrm{d}}

\usepackage{bm} 

\newcommand{\be}{\begin{equation}}
\newcommand{\ee}{\end{equation}}
\newcommand{\bea}{\begin{eqnarray}}
\newcommand{\eea}{\end{eqnarray}}

\newcommand{\diag}{{\mathrm{diag}}}
\newcommand{\pul}{{\textstyle{\frac{1}{2}}}}
\def \bl {\mbox{\boldmath{$\ell$}}}
\def \bn {\mbox{\boldmath{$n$}}}
\def \bk {\mbox{\boldmath{$k$}}}
\def \bmu {\mbox{\boldmath{$\mu$}}}
\def \bx {\mbox{\boldmath{$x$}}}
\def \l {\ell}

\newcommand{\mbb}[1]{{{{\mbox{\boldmath{$m$}}}}_{(#1)}}}
\newcommand{\mbtt}[1]{{{{\mbox{\boldmath{$\tilde m$}}}}_{(#1)}}}

\newcommand{\Mt}[3] {{\stackrel{#1}{\tilde M}}_{{#2}{#3}}}
\def\a{\alpha}
\newcommand{\mub}[1]{{{\mathbf \mu}}}

\newtheorem{theorem}{Theorem}[section]
\newtheorem{prop}[theorem]{Proposition}
\newtheorem{corol}[theorem]{Corollary}

\theoremstyle{remark}
\newtheorem{rem}{Remark}[section]

\newcommand{\Phis}{\Phi^\mathrm{S}} 
\newcommand{\Phia}{\Phi^\mathrm{A}} 

\newcommand{\tho}{{\textrm{\thorn}}}
\newcommand{\rhob}{\bm{\rho}}

\newcommand{\Phib}{{\bm{\Phi}}}

\newcommand{\Id}{{\mathbf{1}}}

\newcommand{\beqn}{\begin{eqnarray}}
\newcommand{\eeqn}{\end{eqnarray}}
\newcommand{\pa}{\partial}
\newcommand{\ba}{\begin{array}}
\newcommand{\ea}{\end{array}}
\newcommand{\pp}{{\it pp\,}-}
\newcommand{\beqnn}{\begin{eqnarray*}}
\newcommand{\eeqnn}{\end{eqnarray*}}

\newcommand{\RT}{RT }

\def \WDS #1 {\mbox{$\Phi_{#1}^{S}$}}
\def \WDA #1 {\mbox{$\Phi_{#1}^{A}$}}
\def \WD #1 {\mbox{$\Phi_{#1}$}}
\def \kd #1 {\delta_{#1}}

\def \cL {{\cal L}}

\def \BEAH {\begin{eqnarray*}}
\def \EEAH {\end{eqnarray*}}
\def \BEA {\begin{eqnarray}}
\def \EEA {\end{eqnarray}}
\def \BDM {\begin{displaymath}}
\def \EDM {\end{displaymath}}

\def \pul {{{\footnotesize{\frac{1}{2}}}}}

\newcommand{\M}[3] {{\stackrel{#1}{M}}_{{#2}{#3}}}

\newcommand{\cM}[3] {{\stackrel{#1}{{\cal M}}}_{{#2}{#3}}}

\newcommand{\lb}{\mbox{\boldmath{$\ell$}}}


\newcommand{\mmb}[1]{{{\boldsymbol{m}}_{(#1)}}}

\newcommand{\eb}{\mathbf {e}}

\def \bbm {\mbox{\boldmath{$m$}}}

\setlength{\arraycolsep}{2pt} 


%
%

\hoffset=-30mm
\setlength{\textwidth}{185mm}
\setlength{\textheight}{235mm}
\topmargin=-15mm

\numberwithin{equation}{section}

\title{On the Goldberg-Sachs theorem in higher dimensions \\ 
in the non-twisting case} 

\author{Marcello Ortaggio\thanks{ortaggio@math.cas.cz}, Vojt\v ech Pravda\thanks{pravda@math.cas.cz}, Alena Pravdov\'a\thanks{pravdova@math.cas.cz} \\
Institute of Mathematics, Academy of Sciences of the Czech Republic \\ \v Zitn\' a 25, 115 67 Prague 1, Czech Republic}

\begin{document}
\maketitle

\begin{abstract}

We study a generalization of the ``shear-free part'' of the Goldberg--Sachs theorem for Einstein spacetimes admitting a {\em non-twisting} multiple Weyl Aligned Null Direction (WAND) $\bl$ in $n\ge 6$ spacetime dimensions. The form of the corresponding optical matrix $\rhob$
is restricted by the algebraically special property in terms of the degeneracy of its eigenvalues. In particular, there necessarily exists at least one multiple eigenvalue and further constraints arise in various special cases. For example, when $\rhob$ is non-degenerate and the Weyl components $\Phi_{ij}$ are  non-zero, all eigenvalues of $\rhob$ coincide and such spacetimes {thus correspond} to the Robinson-Trautman (RT) class. {On the other hand, in certain degenerate cases all non-zero eigenvalues can be distinct.}  We also present explicit examples of Einstein spacetimes  admitting some of  the permitted forms of $\rhob$, including examples violating the ``optical constraint''. {The obtained restrictions on $\rhob$ are, however, in general not sufficient for $\bl$ to be a multiple WAND, as demonstrated by a few ``counterexamples''.} 
We also discuss the geometrical meaning of these restrictions in terms of integrability properties of certain null distributions.
Finally, we specialize our analysis to the six-dimensional case, {where all the permitted forms of $\rhob$ are given in terms of {just} two parameters.} 
In the appendices some examples are given and certain results pertaining to (possibly) twisting mWANDs of Einstein spacetimes are presented.

\end{abstract}

\tableofcontents

\section{Introduction}

\label{sec_intro}

In four-dimensional General Relativity, a fundamental connection between geometric optics and the algebraic structure of the Weyl tensor is provided by the Goldberg-Sachs (GS) theorem \cite{GolSac62,NP}, which states that an Einstein spacetime is algebraically special if, and only if, it contains a shearfree geodetic null congruence (cf.~\cite{Stephanibook,penrosebook2} for related results and generalizations). This theorem plays an important role in the construction of algebraically special exact solutions of the Einstein equations \cite{Stephanibook}, as the remarkable discovery of the Kerr metric shows \cite{Kerr63}. 

In recent years, the growing interest in higher dimensional gravity has motivated the study of extensions of the above concepts to  $n>4$ dimensions. An algebraic classification of the Weyl tensor based on the notion of Weyl Aligned Null Directions (WANDs) has been put forward in \cite{Coleyetal04} (see also \cite{OrtPraPra12rev} for a recent review). Furthermore, a higher dimensional version of Newman-Penrose (NP) and Geroch-Held-Penrose (GHP) formalisms have been presented in \cite{Pravdaetal04,Coleyetal04vsi,OrtPraPra07} and \cite{Durkeeetal10}, respectively. However, simple examples reveal that neither the ``geodesic part'' nor the ``shearfree part'' of the GS theorem extend in an obvious way to higher dimensions (see \cite{Ortaggioetal12} for a brief summary, along with the original references \cite{MyePer86,FroSto03,Pravdaetal04,OrtPraPra07,PraPraOrt07,GodRea09,Durkee09}).

The proper formulation of the geodesic part of the higher dimensional GS theorem has been proven in \cite{DurRea09} (see also \cite{PraPraOrt07,Durkee09}): in particular, {\em an Einstein spacetime admits a multiple WAND if, and only if, it admits a geodesic multiple WAND} -- hence one can restrict to geodesic multiple WANDs {(mWANDs)} with no loss of generality. Very recent work \cite{Ortaggioetal12} has analyzed the shearfree part in five dimensions by proving necessary conditions on the form of the optical matrix $\rhob$ (defined below in \eqref{expmatrix})  following from the existence of an mWAND.\footnote{See \cite{Taghavi-Chabert11} for a different formulation of the  ``shearfree'' condition and \cite{Ortaggioetal12} for a comparison between the two approaches.}  Contrary to the 4d case, the conditions obtained in \cite{Ortaggioetal12} are, in general, not sufficient. In fact, it seems that in higher dimensions  conditions that are both necessary and sufficient in general do not exist.

In the present contribution we focus on $n$-dimensional Einstein (including Ricci-flat) spacetimes of type~II (type D being understood as a special subcase thereof) in more than five dimensions and work out the corresponding necessary conditions on $\rhob$ { {\em under the assumption that the mWAND is twistfree} (i.e., $\rhob$ is symmetric, {equivalent to $A_{ij}=0$, cf.~\eqref{S_A} below)}}.\footnote{This automatically guarantees that $\bl$ is geodesic. Except for a few remarks, we assume {that $\rhob$ is non-zero}, since we are interested in studying its possible non-trivial forms.} 
Since the types N and III were already studied in \cite{Pravdaetal04}, we will simply connect the results of \cite{Pravdaetal04} to our analysis when appropriate. As it turns out, in $n>5$ dimensions the constraints on $\rhob$ are in general not as strong as those for $n=5$ \cite{Ortaggioetal12}.  Nevertheless, combining our results for a type II spacetime with the results for type III and type N \cite{Pravdaetal04}  gives

\begin{theorem}[Eigenvalue structure of $\rhob$ for $n\ge 6$ and $A_{ij}=0$]
\label{prop_GSHD}

In an algebraically special Einstein spacetime of dimension $n\ge 6$ that is not conformally flat, the $($symmetric$)$ optical matrix {$\rhob$} of a non-twisting multiple WAND has at least {\em one double eigenvalue}. In the following special cases stronger conditions hold and the most general permitted eigenvalue structures  are
\begin{enumerate}[(i)]
	\item if $\Phi^A_{ij} \ne 0$: $\{a,a,0,\ldots , 0\}$ {\rm\cite{Ortaggioetal12}}, 
 		\label{PhiA}
	\item if $\det \rhob\not= 0$, $\Phi_{ij}\not=0$: $\{a,a,\ldots , a\}$ (Robinson-Trautman, $\Phi_{ij}\propto \delta_{ij}$, type D(bd)), 
\label{RT}
 \item if  $\Phi_{ij}=0$ (type II(abd)): $\{a,a,b,b,c_1,\ldots , c_{n-6}\}$,   					\label{Phi0}
\item
if the type is N or III: $\{a,a,0,\ldots , 0\}$ {\rm\cite{Pravdaetal04}}. 
 \label{NIII}
 \end{enumerate}
\end{theorem}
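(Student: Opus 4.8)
The plan is to work in the higher-dimensional GHP formalism with a frame parallelly transported along $\bl$ and adapted to the eigenvectors of the symmetric matrix $\rhob=S_{ij}$. Since $\bl$ is a non-twisting (hence geodesic, affinely parametrizable) multiple WAND of an Einstein spacetime of type II, the boost-weight $+2$ Weyl component $C_{0i0j}$ vanishes and the Ricci contribution drops out ($R_{00}=0$ for null $\bl$), so the Ricci (Sachs) identity reduces to $DS_{ij}=-S_{ik}S_{kj}$. First I would observe that this ODE commutes with $S$, so its eigenframe is parallelly propagated and each eigenvalue obeys the Riccati equation $Ds_a=-s_a^2$; integrating gives $s_a=1/(r-r_a)$ along the affine parameter $r$ (with $Dr_a=0$), or $s_a\equiv 0$. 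This pins down the $r$-dependence of $S$ explicitly and reduces the whole problem to determining the admissible multiplicities among the integration data $r_a$.

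Next I would impose the Bianchi identities for a type II Einstein spacetime. In the chosen frame they split into transport equations in $D$ for the boost-weight-zero Weyl components $\Phi_{ij}$ (with symmetric and antisymmetric parts $\Phi^S_{ij}$, $\Phi^A_{ij}$) and $\Phi_{ijkl}$, together with purely algebraic constraints coupling these components to $S$. The crucial output I expect is a family of relations of the schematic form $(s_a-s_b)\,(\text{Weyl})_{ab}=\ldots$, in which eigenvalue differences multiply off-diagonal boost-zero Weyl components. Substituting the explicit $s_a=1/(r-r_a)$ into the $D$-transport equations and matching the resulting rational dependence on $r$ order by order then converts these into conditions on the $r_a$: whenever two eigendirections carry distinct $r_a$, the associated Weyl components are forced to vanish.

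The general statement — at least one double eigenvalue — would then follow by contradiction. If all $n-2$ eigenvalues were distinct, every off-diagonal boost-zero component would be annihilated by the factors $(s_a-s_b)$, while the diagonal ones would be fixed by the trace relations and the $D$-equations; I would show that this cascade, together with the already-vanishing positive-boost-weight components (type II) and the Bianchi transport of the negative-boost-weight components, forces the entire Weyl tensor to vanish, contradicting non-conformal-flatness. Hence at least two of the $r_a$ must coincide.

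Finally I would treat the special cases by tracking which Weyl components survive a given eigenvalue pattern. Case (i), $\Phi^A_{ij}\ne0$, is the rigid one: the antisymmetric boost-zero part enters the constraints most strongly and, as established in \cite{Ortaggioetal12}, forces $\{a,a,0,\ldots,0\}$. For case (ii), non-degeneracy ($\det\rhob\ne0$, all $r_a$ finite) eliminates the zero eigenvalue and, combined with $\Phi_{ij}\ne0$, leaves the fully degenerate option $\{a,\ldots,a\}$ as the only survivor, after which I would identify $\Phi_{ij}\propto\delta_{ij}$ and the Robinson-Trautman / type D(bd) structure. Case (iii), $\Phi_{ij}=0$, relaxes the constraints to the pairing $\{a,a,b,b,c_1,\ldots,c_{n-6}\}$, and case (iv) is obtained by quoting \cite{Pravdaetal04} for types N and III. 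I expect the main obstacle to be the algebraic bookkeeping in the second step: in $n\ge6$ the four-index boost-zero component $\Phi_{ijkl}$ is genuinely independent (unlike in $n=5$, where it is determined by $\Phi_{ij}$), so establishing exactly which of its components the eigenvalue-difference constraints annihilate — and hence nailing down the precise pairing pattern rather than a weaker multiplicity bound — is the delicate part of the argument.
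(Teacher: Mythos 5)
Your proposal follows essentially the same route as the paper: integrate the Sachs equation in a parallelly propagated eigenframe to get $\rho_i=1/(r-b_i)$, solve the Bianchi transport equations for the boost-weight-zero components $\Phi_{ij}$, $W_{ij}$, $\Phi^{\{3\}}_{ijkl}$, $\Phi^{\{4\}}_{ijkl}$ explicitly in $r$, and then match the rational/pole structure against the algebraic constraints (the trace relation $\sum_j W_{ij}=-2\Phi_i$ and the eigenvalue-difference relations such as $\Phi_{(i)(j)(i)(k)}(\rho_{(j)}-\rho_{(k)})=0$) to force coincidences among the $b_i$, with the general "at least one double eigenvalue" statement obtained by showing that fully distinct eigenvalues annihilate all boost-weight-zero components. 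You also correctly identify the genuinely delicate point, namely the bookkeeping for the independent four-index components $\Phi_{ijkl}$ in $n\ge 6$, which is exactly where the paper's case-by-case analysis (sections 4.2--4.3 and 5.2) does its work.
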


(The Weyl components $\Phi^A_{ij}$ and $\Phi_{ij}$ are defined below in \eqref{bw0}.) The proof of (\ref{RT}) and  (\ref{Phi0}) is a new result and will be given in sections~\ref{sec_nondeg} and  \ref{sec_nontwist}, respectively.
Note that in the above cases (\ref{PhiA})--(\ref{NIII}) the matrix $\rhob$  possesses at least {\em two} double eigenvalues. In particular, in six dimensions all possible cases can be listed explicitly:

\begin{corol}[Eigenvalue structure of $\rhob$ for $n=6$ and $A_{ij}=0$]
\label{corol6d}
In a 6d algebraically special Einstein spacetime that is not confomally flat, the $($symmetric$)$ optical matrix $\rhob$ of a non-twisting multiple WAND can have only one of the following eigenvalue structures (where $a,b $ might coincide, or vanish): (i) $\{a,a,b,b \}$; (ii) $\{a,a,b,0 \}$; (iii) $\{a,b,0,0 \}$. If the spacetime is type III or N then the structure is $\{a,a,0,0\}$.

\end{corol}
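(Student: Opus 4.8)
The plan is to derive Corollary~\ref{corol6d} directly from Theorem~\ref{prop_GSHD} specialized to $n=6$. Under the corollary's hypotheses (algebraically special Einstein, not conformally flat, non-twisting mWAND so that $A_{ij}=0$ and $\rhob$ is symmetric) the transverse space is four-dimensional, hence $\rhob$ is a real symmetric $4\times4$ matrix with four real eigenvalues counted with multiplicity. The general part of the theorem already forces at least one double eigenvalue, so the only candidate multisets are $\{a,a,b,c\}$, $\{a,a,a,b\}$, $\{a,a,b,b\}$ and $\{a,a,a,a\}$ (distinct letters denoting distinct values). It then remains to show that the additional clauses of the theorem eliminate everything except the forms (i)--(iii), and that each survivor is subsumed under (i)--(iii) once $a,b$ are allowed to coincide or vanish.

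First I would treat the non-degenerate case $\det\rhob\neq0$ via the dichotomy on $\Phi_{ij}$. If $\Phi_{ij}\neq0$, clause~(\ref{RT}) gives the Robinson--Trautman form $\{a,a,a,a\}$; if $\Phi_{ij}=0$, clause~(\ref{Phi0}) with $n-6=0$ gives $\{a,a,b,b\}$. Both are of type~(i), and this step is exactly what excludes the two ``generic-looking'' non-degenerate multisets $\{a,a,b,c\}$ and $\{a,a,a,b\}$ (all entries nonzero), which survive the plain double-eigenvalue count but are absent from the corollary.

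Next I would treat the degenerate case $\det\rhob=0$, i.e.\ at least one zero eigenvalue, using only the general double-eigenvalue statement. If two or more eigenvalues vanish the multiset is $\{a,b,0,0\}$, namely type~(iii). If exactly one eigenvalue vanishes, the zero is simple and so the guaranteed double eigenvalue must lie among the three nonzero ones; hence two of them coincide and the multiset is $\{a,a,b,0\}$ (possibly $b=a$), namely type~(ii). This exhausts all cases and proves that (i)--(iii) are the only admissible structures. The final sentence is then immediate from clause~(\ref{NIII}): for type~N or~III the form $\{a,a,0,\ldots,0\}$ reduces to $\{a,a,0,0\}$ since $n-2=4$.

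Being a finite enumeration, the argument presents no analytic difficulty; the only point requiring care is completeness. I would make sure that the stronger clauses~(\ref{RT}) and~(\ref{Phi0}) are invoked before concluding in the non-degenerate case, since the bare double-eigenvalue statement is insufficient there: without them the non-degenerate multisets $\{a,a,b,c\}$ and $\{a,a,a,b\}$ would not be ruled out, and the list (i)--(iii) would be incomplete.
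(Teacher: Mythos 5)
Your proof is correct and takes essentially the same route as the paper, which presents corollary~\ref{corol6d} as an immediate specialization of theorem~\ref{prop_GSHD} to $n=6$: the non-degenerate multisets $\{a,a,b,c\}$ and $\{a,a,a,b\}$ (all entries non-zero, not of the listed forms) are eliminated precisely by clauses~(\ref{RT}) and~(\ref{Phi0}), while the degenerate cases follow from the bare double-eigenvalue statement. Your enumeration by the number of vanishing eigenvalues reproduces, in coarser form, the rank-by-rank analysis the paper carries out in section~\ref{subsec_permitted_6D}.
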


Note that we do not claim that all classes compatible with theorem~\ref{prop_GSHD} or corollary  \ref{corol6d} are non-empty. 

The structure of the paper is the following. In section~\ref{sec_genII} we briefly summarize  results for general (possibly twisting) type II spacetimes in arbitrary dimension (already contained in \cite{Ortaggioetal12}), in particular the so called ``optical constraint''. These will be useful  in  the rest of the paper, where, however, we limit ourselves to the non-twisting case. 
In section~\ref{sec_nontw_gen} we set up the study of general non-twisting type II spacetimes. It turns out that it is convenient to study the case of non-degenerate and degenerate $\rhob$ separately. This is done in sections~\ref{sec_nondeg} and \ref{sec_nontwist}, where we derive certain necessary conditions for $\rhob$ being  compatible with a non-twisting mWAND, 
in particular proving thus theorem~\ref{prop_GSHD}. Using various explicit examples we show in section~\ref{sec_counter} that the necessary conditions obtained in sections~\ref{sec_nondeg} and \ref{sec_nontwist} are in general not  sufficient. In section~\ref{sec_totgeod} we elucidate the geometrical implications of theorem~\ref{prop_GSHD} in terms of existence of integrable null distributions with
totally geodesic integral null two-surfaces. In section~\ref{sec_6D} we study the six-dimensional case (corollary  \ref{corol6d}) and provide various explicit examples of such metrics.

Some related results are given in the appendices. {{In appendix~\ref{app_alg_eq} we present new algebraic constraints for general type II Einstein spacetimes. Appendix~\ref{app_type_N_III} contains a new proof of the canonical form of the optical matrix $\rhob$ for type III Einstein spacetimes in the case of a non-twisting mWAND $\bl$ (the original, longer proof was given in \cite{Pravdaetal04}). In appendix~\ref{app_shearfree} we summarize various equivalent formulations of the ``geodesic and shearfree'' condition in four dimensions, and we discuss some extensions of those to higher dimensions (including optical structure, optical constraint, integrability of certain null distributions). Appendix~\ref{app_OS5d} contains the proof (mostly relying on results of \cite{Ortaggioetal12,Wylleman_priv}) of the existence of an optical structure in all algebraically special spacetimes in five dimensions. In appendix~\ref{app_shearfreetwist} we present examples of twisting but shearfree mWANDs (in six dimensions). Finally, in appendix~\ref{app_violating}  examples of Einstein spacetimes of type D with non-degenerate mWANDs violating the optical constraint are given.}}

\paragraph{Notation}

Throughout the paper we use higher dimensional GHP formalism developed in \cite{Durkeeetal10}. We employ a null frame
\begin{equation}
  \{\lb \equiv \eb_{(0)}=\eb^{(1)},
  \bn \equiv \eb_{(1)} = \eb^{(0)},
    \mmb{i}\equiv\eb_{(i)} = \eb^{(i)} \}
\end{equation}
with indices $i,j,k,\ldots$ running from $2$ to $n-1$. The null vector fields  $\lb$ and $\bn$  and the orthonormal spacelike vector fields $\mmb{i}$  obey $\lb \cdot \bn =1$, $\mmb{i} \cdot \mmb{j} = \delta_{ij}$ and  $\lb \cdot \mmb{i} = 0 = \bn \cdot \mmb{i}.$
The optical matrix $\rhob$ is defined by
\be
 \rho_{ij} = m_{(i)}^a m_{(j)}^b \nabla_b \ell_a , \label{expmatrix}
\ee
and its trace gives the expansion scalar (up to normalization)
\be
 \rho \equiv \rho_{ii}.
\ee
{Its (anti-)symmetric parts are, respectively,
\be
 S_{ij}\equiv \rho_{(ij)} , \qquad A_{ij}\equiv \rho_{[ij]} .
 \label{S_A}
\ee 
We also define the rank of $\rhob$ as} 
\be
 m\equiv\mbox{rank}(\rhob).
\ee
Other Ricci rotation coefficients used in the paper are \cite{Pravdaetal04,OrtPraPra07,Durkeeetal10} 
\be
 L_{1i} = n^a m_{(i)}^b \nabla_b \ell_a , \quad \tau_i\equiv L_{i1} =  m_{(i)}^a n^b \nabla_b \ell_a ,\quad \M{i}{j}{0}=m_{(j)}^a\ell^b \nabla_b m_{(i)a} , \quad \M{i}{j}{k}=m_{(j)}^am_{(k)}^b \nabla_b m_{(i)a} .
 \label{L1i_M}
\ee

Boost weight (b.w.) zero components of the Weyl tensor  
\be
\Phi_{ijkl} = C_{ijkl}, \  \ \  \Phi_{ij} = C_{0i1j},\ \ \  2\Phia_{ij} = C_{01ij}, \ \ \  \Phi = C_{0101}
\label{bw0}
\ee
are subject to
\bea
&& \Phi_{ijkl} = \Phi_{[ij][kl]} = \Phi_{klij}, \qquad  \Phi_{i[jkl]}=0, \label{Phicycl} \\
&& \Phis_{ij} \equiv \Phi_{(ij)}   = -\pul\Phi_{ikjk}, \qquad \Phia_{ij} \equiv \Phi_{[ij]}, \qquad \Phi=\Phi_{ii}, \label{PhiS} 
\eea
and thus they are determined by the $\frac{1}{12}(n-1)(n-2)^2(n-3)$ components of $\Phi_{ijkl}$ and the $\pul (n-2)(n-3)$ components of $\Phia_{ij}$. In a type II spacetime, at least one of the latter two quantities must be non-vanishing when $\bl$ is an mWAND (otherwise the spacetime would be type III or more special). 
Recall that the algebraic subtypes II(a), II(b), II(c) and II(d) are defined by the vanishing of $\Phi$, the symmetric traceless part of $\Phi_{ij}$, the ``Weyl part'' of $\Phi_{ijkl}$, and $\Phia_{ij}$, respectively (see \cite{OrtPraPra12rev}).

There is no summation in case one or both repeated indices is/are  in round brackets, unless it is said otherwise. E.g., there is no summation in $\Phi_{i(j)k(j)}$ and there is summation in $\Phi_{ijkj}$ and $\sum_j \Phi_{i(j)k(j)}$.

{In what follows we shall employ the Sachs equation \cite{Pravdaetal04,OrtPraPra07,Durkeeetal10}
\begin{eqnarray}
  \tho \rho_{ij} &=& - \rho_{ik} \rho_{kj} \label{Sachs},
\end{eqnarray}
and the following  Bianchi identities (eqs.~(A10), (A11) of \cite{Durkeeetal10})    
 \begin{eqnarray}
  \tho \Phi_{ij}   &=& -(\Phi_{ik} + 2\Phia_{ik} + \Phi \delta_{ik}) \rho_{kj}, \label{A2}\label{Bi2}\\[3mm]
  -\tho \Phi_{ijkl}   &=& 4\Phia_{ij} \rho_{[kl]} - 2\Phi_{[k|i} \rho_{j|l]} + 2\Phi_{[k|j} \rho_{i|l]} 
                       + 2\Phi_{ij[k|m} \rho_{m|l]} . \label{A4}\label{Bi3}
\end{eqnarray}
In a parallelly propagated frame,  the GHP directional derivative  $\tho$ reduces to the NP derivative along the $\bl$ direction, $D=\ell^a\nabla_a$. 
For an affinely parametrized $\bl$ with an affine parameter $r$ this is simply $D=\partial_r$.}

There are two possible approaches to studying consequences of the Bianchi and Ricci equations. The first approach consists in applying the derivative operator $\tho$ on certain algebraic equations and in deriving new algebraic constraints, e.g. eq.~\eqref{thornB8}, using \eqref{Sachs}--\eqref{Bi3} (cf. also appendix~\ref{app_alg_eq} and \cite{Ortaggioetal12} (section~3.1 and appendix A therein)). In this paper we mainly use the second approach, which consists in solving the Sachs and Bianchi differential equations \eqref{Sachs}--\eqref{Bi3} and then in analyzing the compatibility of their solutions with the algebraic equations \eqref{B4}--\eqref{thornB8}.

\section{Results for general type II}
\label{sec_genII}

This section is devoted to summarizing some useful results that hold for all type II Einstein spacetimes, without assuming that $\rhob$ is non-twisting. In particular we present the algebraic restrictions on the Weyl components of b.w. zero that follow from the Bianchi identities, to be used in the next sections. More general new results are given in appendix~\ref{app_alg_eq} for future reference.
For genuine type II spacetimes the (unique) mWAND is necessarily geodesic, while for type D spacetimes there always exists a geodesic mWAND \cite{DurRea09}. In the frame used below we thus take $\bl$ to be a geodesic mWAND, without loss of generality.

\subsection{Algebraic constraints}

For type II Einstein spacetimes, the Weyl tensor must obey the algebraic constraints (A.12) and 
(B8) of \cite{Durkeeetal10} (already discussed in \cite{PraPraOrt07,Durkee09}) and $\tho$(B8) (derived in \cite{Ortaggioetal12}), namely 
\bea
2\Phia_{[jk|}\rho_{i|l]} -2\Phi_{i[j}\rho_{kl]} + \Phi_{im [jk|}\rho_{m|l]}  = 0, \label{B4} \\
\Phi_{kj} \rho_{ij} - \Phi_{jk} \rho_{ij} + \Phi_{ij} \rho_{kj} - \Phi_{ji} \rho_{jk} 
       + 2 \Phi_{ij} \rho_{jk} - \Phi_{ik} \rho  + \Phi \rho_{ik} + \Phi_{ijkl} \rho_{jl} = 0 \label{B8} , \\
\left( 2 \Phi_{kj} - \Phi_{jk} \right) \rho_{il} \rho_{jl} + \left( 2 \Phi_{ij} - \Phi_{ji} \right) \rho_{jl} \rho_{kl}  - \Phis_{ik} \rho_{jl} \rho_{jl} + \Phi \rho_{il} \rho_{kl} + \Phi_{ijkl} \rho_{js} \rho_{ls} =0. \label{thornB8}
\eea

Eq. \eqref{B8} is traceless and its
symmetric and antisymmetric parts read, respectively,
\bea
\left( 2 \Phi_{kj} - \Phi_{jk} \right) S_{ij} + \left( 2 \Phi_{ij} - \Phi_{ji} \right) S_{jk}  - \Phis_{ik} \rho + \Phi S_{ik} + \Phi_{ijkl} S_{jl} =0, \label{B8sym}\\
\Phi_{jk} A_{ji} + \Phi_{ji} A_{kj} + \Phi_{ij} \rho_{jk} -\Phi_{kj} \rho_{ji} +  \Phia_{k i} \rho + \Phi A_{ik} + \Phi_{ijkl} A_{jl} = 0 .   \label{B8asym} 
\eea

The reason for investigating algebraic conditions such as \eqref{B4}--\eqref{B8asym} (or those of \cite{Pravdaetal04} in the case of type III/N) is that they will in general constrain the possible form of $\rhob$. This way one obtains the standard Goldberg-Sachs theorem in four dimensions and one can arrive at similar conclusions also in higher dimensions, at least with additional assumptions (e.g. on the Weyl type \cite{Pravdaetal04}, on the number of dimensions \cite{Ortaggioetal12}, 
on the form of the line-element \cite{OrtPraPra09,MalPra11}, on the asymptotic behaviour \cite{OrtPraPra09b}, and/or on optical properties of $\bl$, as we shall discuss in the next sections).

\subsection{The optical constraint}

\label{subsec_OC}

It was observed in \cite{Ortaggioetal12} that the above conditions on the Weyl tensor appear to be less stringent when $\rhob$ satisfies
the {\em optical constraint} \cite{OrtPraPra09}
\be
\rho_{ik} \rho_{jk} \propto \rho_{(ij)} . \label{OC2}
\ee
In particular, when this holds eq.~\eqref{thornB8} is not an extra restriction. 
{This thus suggests that the branch of type II solutions {whose mWAND} obeys \eqref{OC2} corresponds to the case of a ``generic'' Weyl tensor \cite{Ortaggioetal12}.}
It has been proven that \eqref{OC2} indeed holds for {the Kerr-Schild vector} of all (generalized) Kerr-Schild spacetimes \cite{OrtPraPra09,MalPra11} (including Myers-Perry black holes \cite{MyePer86}), for non-degenerate geodesic double WANDs in asymptotically flat type II spacetimes \cite{OrtPraPra09b} (see eq.~(14) therein) {and for the unique double WAND of all genuine type II Einstein spacetimes in five dimensions (see footnote~\ref{foot_typeD} for the type D case) \cite{Ortaggioetal12,Wylleman_priv}.} 
Nevertheless, {double WANDs} violating the optical constraint also exist, as shown by explicit examples constructed below in section~\ref{subsubsec_examples_distinct} for $n\ge 6$, in appendix~\ref{app_violating} for $n\ge 7$, and in section~6.3 of \cite{Ortaggioetal12} for $n=5$.\footnote{We observe that all such examples are of type D. In fact, in five dimensions {\em all} type D Einstein spacetimes admitting a geodesic mWAND violating the optical constraint are known \cite{Ortaggioetal12} (and coincide with the class of Einstein spacetimes admitting a non-geodesic mWAND \cite{DurRea09}). Such ``exceptional'' null directions are necessarily twisting in 5d (but not in higher dimensions). 
However, in all such 5d type D spacetimes there always also exists a pair of non-aligned (non-twisting) mWANDs that {\em do} obey the optical constraint \cite{Ortaggioetal12}. On the other hand, this is generically not true when $n>5$ (see appendix \ref{subsubsec_ex2} for a ten-dimensional example admitting exactly two mWANDs, both violating the optical constraint).\label{foot_typeD}}
Although the above analysis applies only to type D and genuine type II  spacetimes (eqs.~(\ref{B4})--(\ref{thornB8}) become trivial identities for more special types), it is worth remarking that the optical constraint is also obeyed by the mWAND of all type N Einstein spacetimes \cite{Pravdaetal04,Durkeeetal10}. {The mWAND $\bl$ of type III Einstein spacetimes also obeys the optical constraint provided either \cite{Pravdaetal04}: (i) the spacetime is five-dimensional; (ii) the Weyl tensor satisfies a genericity condition (see~\cite{Pravdaetal04}); (iii) $\bl$ is  {\it non-twisting} (see also appendix \ref{app_type_N_III} for a simpler proof in case (iii))}.

The optical constraint implies that $(\Id- \frac{2}{\alpha} \rhob)$ is an orthogonal matrix \cite{Ortaggioetal12} and that consequently  $[\rhob, \rhob^T]$ vanishes \cite{OrtPraPra09,Ortaggioetal12}. The optical matrix $\rhob$ is therefore a {\em normal} matrix and can thus be put, using spins,  into a convenient block-diagonal form {(see \cite{OrtPraPra09,OrtPraPra10} for extended related discussions)}, i.e.,
\bea
\rhob = \alpha{\rm diag}\left(1, \dots 1, 
\frac{1}{1+ \alpha^2 b_1^2}\left[\begin {array}{cc} 1 & -\alpha b_1 \\ 
   \alpha b_1 & 1  \label{canformL} \\
  \end {array}
 \right]
, \dots, 
\frac{1}{1+ \alpha^2 b_\nu^2} \left[\begin {array}{cc} 1 & -\alpha b_\nu \\ 
   \alpha b_\nu & 1 \\ \end {array}
 \right] 
 , 0, \dots ,0
\right).
\eea
The block-diagonal form \eqref{canformL} {is useful for practical purposes because it} allows for determining the $r$-dependence of $\rhob$ by integrating the Sachs equation \eqref{Sachs} {(in a parallelly transported frame)} \cite{OrtPraPra10}. {\em In the non-twisting case it reduces to a sequence of 1s followed by a sequence of 0s} (up to an overall factor). Note that the symmetric part of each two-block is proportional to a two-dimensional identity matrix, i.e. it is ``shear-free''. In four dimensions the optical constraint implies that either there is a single such block (in which case $\rho_{ij}$ is shearfree) or that $\rho_{ij}$ is symmetric with exactly one non-vanishing eigenvalue. However, the Goldberg-Sachs theorem shows that the latter case cannot occur. Therefore in 4d the optical constraint is a necessary condition for $\bl$ to be a repeated principal null direction but it is not sufficient \cite{Ortaggioetal12}.

On the other hand, in higher dimensions a vector field $\bl$ obeying the optical constraint is in general {\em shearing}, as in the case of Myers-Perry black holes \cite{MyePer86} (see also \cite{FroSto03,PraPraOrt07} for a discussion of their optical properties). Together with other results \cite{Pravdaetal04,PodOrt06,OrtPraPra07}, this has made clear that the shearfree condition is in general ``too restrictive'' in higher dimensions, as opposed to the four-dimensional case. In particular, it was observed in \cite{OrtPraPra07} that in {\em odd} dimensions a twisting geodesic mWAND is necessarily shearing. By contrast, twisting geodesic mWANDs with zero shear are permitted in {\em even} dimensions and they have necessarily $\det(\rhob)\neq 0$ (as can be easily seen in a frame adapted to $A_{ij}$, using the fact that $S_{ij}\propto\delta_{ij}$). An explicit example in six dimensions is discussed in appendix \ref{app_shearfreetwist}.

In the non-twisting case, the existence of shearfree spacetimes has been already known for some time in all dimensions -- they are either \RT \cite{PodOrt06} or Kundt \cite{PodZof09} solutions, according to the presence/absence of expansion.

\subsection{Possible generalizations of the geodesic{\&}shearfree property}

In arbitrary dimensions various geometric conditions can be considered which are different from the standard geodesic{\&}shearfree condition (considered above) and which, however, become all equivalent (except for the optical constraint, cf.~\cite{Ortaggioetal12}) in the special case $n=4$.  Further such conditions are discussed in appendix~\ref{app_shearfree}. 
One could thus conceive that various formulations of the Goldberg-Sachs theorem are in principle possible when $n>4$. Some of these formulations have been studied in \cite{Taghavi-Chabert11,Taghavi-Chabert11b,Ortaggioetal12} (see also \cite{OrtPraPra09}) but none of these gives necessary and sufficient conditions. In the rest of this paper we will discuss necessary conditions determined by the presence of a non-twisting mWAND in an $n$-dimensional Einstein spacetime and we will present several explicit examples. A possible interpretation of these results in terms of the geometric conditions of appendix~\ref{app_shearfree} will be discussed in sections~\ref{sec_totgeod} (for $n\ge6$) and \ref{subsec_integrability} (for $n=6$).

\section{Non-twisting $\bl$: general properties}
\label{sec_nontw_gen}

Here we study the optics of a hypersurface orthogonal mWAND $\bl$ in a type II Einstein spacetime.  This mWAND is automatically geodetic. Thus we have 
\be
 \kappa_i=0=A_{ij} ,
\ee
{so that $\rho_{ij}=S_{ij}$.} Since the algebraic equations \eqref{B4}--\eqref{B8asym}  are trivial for  Kundt spacetimes, in what follows we assume $\rhob\not= ${\bf 0}.

\subsection{Case $\Phia_{ij} \neq 0$}

\label{sec_PhiA}

This case has been already analyzed in \cite{Ortaggioetal12}, {arriving at} point (i) of theorem~\ref{prop_GSHD} of section~\ref{sec_intro}. As already observed there, in this case $\rhob$ satisfies the optical constraint and it is  necessarily degenerate ($m=2$).  Moreover it is shearing, except for $\rhob=0$ {(which is necessarily the case for $n=4$)}.

It remains to consider the case when $\Phi^A_{ij}=0$.

\subsection{Case $\Phia_{ij}=0$}

This case defines the subtype II(d) {in the notation of \cite{Coleyetal04}}. For  $\Phia_{ij}=0$, eq.~\eqref{B8} reduces to  (recall $A_{ij}=0$ here)
\be
 -\rho \Phi_{ik} + \Phi \rho_{ik} + 2\Phi_{ij} \rho_{jk} + \Phi_{ijkl} \rho_{jl}=0 ,  \label{eqn:algI}
\ee
and eq. \eqref{B8asym}  to 
\be
[\rhob,\Phib^S]=0.\label{comm_rho_Phi}
\ee
Note that taking into account  (\ref{comm_rho_Phi}), eq. (\ref{eqn:algI}) is symmetric and corresponds to eq.~\eqref{B8sym}.

Similarly, eq. \eqref{thornB8} yields 
\be
 -\Phi_{ik} \rho_{jl} \rho_{jl} + \Phi (\rho^2)_{ik} + 2 \Phi_{ij} (\rho^2)_{jk} + \Phi_{ijkl} (\rho^2)_{jl} = 0\label{eqn:rhoC}.
 \ee

Thanks to \eqref{comm_rho_Phi}, one can choose a {basis where both $\rho_{ij}$ and $\Phi_{ij}$ are diagonal}, $\rho_{ij}=\diag(\rho_2,\rho_3,\dots)$,  
$\Phi_{ij}=\diag(\Phi_2,\Phi_3,\dots)$, { therefore $\Phi_{ijkj}=0$ for $i\neq k$}.
In this frame, {the off-diagonal components of the algebraic constraint~(\ref{eqn:algI}) are} 
\be 
 \sum_j\rho_{(j)}\Phi_{i(j)k(j)}=0 \qquad (\mbox{for } k\neq i) .
 \label{nondiag_constr}
\ee
 The diagonal part of~(\ref{eqn:algI}) can be expressed as  ${\cal L}_{ij} \tilde \rho_j =0$, 
where $\tilde \rho_{i}$ is the  vector  $\tilde \rho_{i}=(\rho_2, \rho_3 \dots)$  
and the linear operator ${\cal L}_{ij}$ is given by
\[
  {\cal L}_{ij} = W_{ij} + \Phi  \delta_{ij} + 2 \diag (\Phi_{2},\Phi_{3},\dots \Phi_{n}) - \left[ 
                     \begin{array}{cccc}
                       \Phi_{2}                  & \Phi_{2}        & \dots & \Phi_{2}     \\
                       \Phi_{3}       & \Phi_{3}       & \dots & \Phi_{3}   \\
                       \vdots    & \vdots    & \dots & \vdots\\
                       \Phi_{n} & \Phi_{n} & \dots & \Phi_{n}
                     \end{array}
               \right],
\]
or, equivalently, ${\cal L}_{ij} = W_{ij} +( \Phi + 2\Phi_{(i)} )  \delta_{ij} -\Phi_{(i)} $, where {we have defined the symmetric and traceless matrix}
\be
	W_{ij} \equiv C_{(i)(j)(i)(j)} \qquad {\mathrm{\ \ (no\ summation)}}.\label{def_W}
\ee
{Note that $W_{(i)(i)}=0$.} From  the first of \eqref{PhiS}, it follows  
\be
\sum_j W_{ij} = - 2 \Phi_i ,
 \label{W_phi}
\ee
{which will be used in certain calculations throughout the paper.} (From  \eqref{W_phi} 
it follows that $W_{ij}=0\Rightarrow {\cal L}_{ij}=0$.)  
Thus the sum of the rows of ${\cal L}_{ij}$ vanishes and so they are  linearly dependent and 
\be
\det {\cal {\bm{L}}} =0. \label{detL}
\ee 
Therefore  zero is an eigenvalue of ${\cal L}$ and  (the diagonal part of) (\ref{eqn:algI})  admits non-trivial solutions.

Note that (\ref{eqn:rhoC}) has the form ${\cal L}_{ij} {\tilde \rho}^2_j =0$, where components of  the $(n-2)$ dimensional vector ${\tilde \rho}^2$ are squares of components of ${\tilde \rho}$.  The characteristic polynomial of ${\cal L}$ is 
\be
 \det({\cal L}-\lambda I)=\lambda^{n-2} + k_{n-3} \lambda^{n-3} + \dots + k_1 \lambda + k_0 ,
\ee
with $k_0=0$ due to (\ref{detL}).
Now zero is a multiple eigenvalue of ${\cal L}$ iff  $k_1=0$. Let us observe that for a generic form of a type II Weyl tensor $k_1$ is non-vanishing {and thus zero is a single eigenvalue 
of ${\cal L}$}.\footnote{For example in five dimensions $k_1 = 12 \left(\phi_2 \phi_4+ \phi_3 \phi_4 + \phi_2 \phi_3 \right)$. {If we consider, for instance, five-dimensional} \RT spacetimes ({which coincide with the Schwarzschild solution plus a possible cosmological constant}), characterized by $\rho_{ij} \propto \delta_{ij}$,  equation \eqref{eqn:algI} implies $\Phi_{ij} \propto \delta_{ij}$, which clearly leads to $k_1 \not=0$.} 
If this is the case then   ${\tilde \rho}^2$ is proportional to ${\tilde \rho}$ and therefore 
\be
\rho_{ij}= \alpha \diag(1,1, \dots 1,0,\dots 0).
\ee
Recalling also point (i) of theorem~\ref{prop_GSHD},
we can conclude with 

\begin{prop}
For non-twisting type II Einstein spacetimes with $k_1 \not=0$, all non-vanishing  eigenvalues of the optical matrix $\rho_{ij}$ coincide. 
\label{prop_nontwist}
\end{prop}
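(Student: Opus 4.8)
The plan is to work entirely within the diagonalizing frame already established in the excerpt, where $\rho_{ij}=\diag(\rho_2,\ldots,\rho_{n-1})$ and $\Phi_{ij}=\diag(\Phi_2,\ldots,\Phi_{n-1})$, using the operator $\mathcal{L}$ whose kernel controls the admissible eigenvalue vectors. The two key algebraic facts I would lean on are already derived: the diagonal part of \eqref{eqn:algI} says $\mathcal{L}_{ij}\tilde\rho_j=0$, i.e. $\tilde\rho\in\ker\mathcal{L}$, while the diagonal part of \eqref{eqn:rhoC} says $\mathcal{L}_{ij}\tilde\rho^2_j=0$, i.e. the entrywise-squared vector $\tilde\rho^2$ also lies in $\ker\mathcal{L}$. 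The hypothesis $k_1\neq 0$ is exactly the statement that zero is a \emph{simple} eigenvalue of $\mathcal{L}$, so $\ker\mathcal{L}$ is one-dimensional.

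First I would make the dimension count explicit: since $\det\mathcal{L}=0$ from \eqref{detL} but $k_1\neq 0$, the characteristic polynomial has a simple root at $\lambda=0$, hence $\ker\mathcal{L}$ is spanned by a single vector. Both $\tilde\rho$ and $\tilde\rho^2$ live in this one-dimensional kernel, so there must exist a scalar $\mu$ with $\tilde\rho^2=\mu\,\tilde\rho$ (treating the degenerate case $\tilde\rho=0$ separately — but that is excluded by our standing assumption $\rhob\neq\mathbf{0}$). Componentwise this reads $\rho_i^2=\mu\,\rho_i$ for every $i$, so each eigenvalue satisfies $\rho_i(\rho_i-\mu)=0$, forcing $\rho_i\in\{0,\mu\}$. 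Thus every nonzero eigenvalue equals the same value $\mu=\alpha$, which is precisely the claimed structure $\rho_{ij}=\alpha\,\diag(1,\ldots,1,0,\ldots,0)$ after an ordering of the frame.

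I would then note that this argument covers the subtype $\Phia_{ij}=0$ handled in this subsection, and combine it with point~(i) of Theorem~\ref{prop_GSHD} (the case $\Phia_{ij}\neq 0$, already established in \cite{Ortaggioetal12}), where the conclusion is even stronger. In that branch $\rhob$ is degenerate with eigenvalue structure $\{a,a,0,\ldots,0\}$, which is a special instance of ``all nonzero eigenvalues coincide,'' so the two cases together yield the proposition for all non-twisting type~II Einstein spacetimes with $k_1\neq 0$.

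The main obstacle is not the linear algebra, which is short, but rather ensuring the logical hygiene of the $k_1\neq 0$ hypothesis: one must verify that $k_1\neq 0$ genuinely implies the \emph{algebraic} multiplicity of the zero eigenvalue is one (not merely the geometric multiplicity), so that a single kernel vector really does force the proportionality $\tilde\rho^2\propto\tilde\rho$. Since $k_1$ is the coefficient of $\lambda$ in the characteristic polynomial and $k_0=0$, the condition $k_1\neq 0$ is equivalent to $0$ being a root of multiplicity exactly one, which is what pins down $\dim\ker\mathcal{L}=1$; I would state this equivalence carefully. A minor secondary point is to confirm that the off-diagonal constraint \eqref{nondiag_constr} imposes no additional obstruction to the diagonal conclusion, which is immediate once the diagonal structure is fixed.
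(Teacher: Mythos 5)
Your proposal is correct and follows essentially the same route as the paper: both arguments use that $k_0=0$ together with $k_1\neq 0$ forces zero to be a simple eigenvalue of $\mathcal{L}$, so the one-dimensional kernel containing both $\tilde\rho$ and $\tilde\rho^2$ yields $\tilde\rho^2\propto\tilde\rho$, hence $\rho_i\in\{0,\mu\}$, and both then invoke point~(i) of theorem~\ref{prop_GSHD} for the $\Phia_{ij}\neq 0$ branch. Your explicit remarks on the algebraic-versus-geometric multiplicity of the zero eigenvalue and on the harmlessness of the off-diagonal constraint \eqref{nondiag_constr} are details the paper leaves implicit, but they do not change the argument.
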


If $\Phia_{ij} \not= 0$, the stronger result of point (i) of theorem~\ref{prop_GSHD} holds.

As we will see in sections~\ref{subsec_alldistinct} and \ref{subsec_permitted_6D} (table~\ref{tab_6D}), 
there do exist non-twisting type II Einstein spacetimes with distinct non-vanishing eigenvalues of $\rhob$. 
According to Proposition~\ref{prop_nontwist}, for these spacetimes $k_1$ vanishes. In particular $\cal L$, and therefore also $k_1$, vanishes for spacetimes with $W_{ij}=0$. We will study special cases with various parts of the Weyl tensor vanishing in the following sections.

\section{Non-twisting $\bl$: non-degenerate $\rhob$ ($m=n-2$)}

\label{sec_nondeg}

First, let us consider the case of a non-degenerate $\rhob$ (i.e., $\det\rhob\not= 0$ -- this is relevant, e.g., for asymptotically flat algebraically special spacetimes \cite{OrtPraPra09b}). When $\det\rhob\not= 0$ one necessarily has (see point (i) of theorem~\ref{prop_GSHD}) 
\be
 \Phia_{ij}=0 ,
\ee
so that the type is II(d). As noticed above we can use a frame in which both $\rho_{ij}$ and $\Phi_{ij}$ are diagonal, which is moreover compatible with parallel transport \cite{PraPra08,OrtPraPra10}. Thus the eigenvalues of the optical matrix  are \cite{PraPra08}
\be
 \rho_i=\frac{1}{r-b_i} .
 \label{rho_i}
\ee

{In order to discuss the possible structures of $\rhob$ it is convenient to discuss separately various cases in which different parts of the Weyl tensor are (non-)zero.}

\subsection{Cases $\Phi_{ij}\neq0$ and $\Phi_{ij}=0\neq W_{ij}$}

\label{subsec_nondeg_Phi}

From Bianchi equation~\eqref{Bi2}  one has
\beqn
 D\left(\frac{\Phi_{(i)}}{\rho_{(i)}}\right)=-\Phi .
 \label{nondeg_DPhi_i}
\eeqn
The case when all  $\rho_i$ coincide (and are non-zero) is the known \RT case, for which also all  $\Phi_i$ must coincide {\cite{PraPra08,PodOrt06}.}
{Thus} let us  consider here the case when at least two eigenvalues $\rho_i$ are different.

Since the r.h.s. {of~(\ref{nondeg_DPhi_i})} is the same for any $i$, we obtain that either all $\Phi_i=0$, or
\be
 \Phi_{i}=\rho_i(A+\Phi^0_{(i)}) ,
 \label{Phi_i}
\ee
where $A=A(r)$ is a function that must satisfy the equation $DA+A\rho=-\Phi^0_{j}\rho_j$ (since $\Phi=A\rho+\Phi^0_{j}\rho_j$) and, from now on, quantities with a superscript $^0$ do not depend on $r$. Using the intermediate substitution $A(r)=Y(r)\prod_k\rho_k$, one arrives at the solution 
\be
 A=\left(\prod_k\frac{1}{r-b_k}\right)\left[-\sum_j\Phi^0_{j}\int\d r\prod_{l\neq j}(r-b_l)+A^0\right] .\label{A_nondeg}
\ee
For subsequent discussions it is useful to rewrite \eqref{A_nondeg} using partial fraction decomposition, i.e., 
\be
A=\frac{P_{n-2}}{\prod_k {(r-b_k)}}=a^0+\frac{P_{n-3}}{\prod_k {(r-b_k)}}=a^0+
\sum_{K=1}^{\alpha_{max}} \sum_{\forall i,  \alpha_i\geq K}\frac{c^{(K)}_{(i)}}{(r-b_i)^K} ,\label{A_decomposed}
\ee
where $P_k$ denotes a polynomial of order $k$ in $r$,  $c^{(K)}_{(i)}$ {do not depend on $r$} and $\alpha_i$ denotes the multiplicity of $b_i$, with $\alpha_{max}$ being the maximal multiplicity. {In particular, the term $a^0$ can be determined by looking at} the leading term of \eqref{A_nondeg} in the limit $r\to\infty$, i.e., 
\be
 A=-\frac{1}{n-2}\sum_k\Phi^0_k+O(r^{-1}) ,
\ee
so that
\be
a^0=-\frac{1}{n-2}\sum_k\Phi^0_k.\label{am}
\ee
In the case $\Phi^0_j=0$, $a^0$ vanishes and  one has  simply $A=A^0/\prod_k {(r-b_k)}$.

Next, the equation for $W_{ij}$ (see \eqref{Bi3}) can be written as (but recall $W_{(i)(i)}=0$)
\beqn
 & & D\left(\frac{W_{ij}}{\rho_{(i)}-\rho_{(j)}}\right)=\frac{\Phi_{i}\rho_j+\Phi_{j}\rho_i}{\rho_{(i)}-\rho_{(j)}} \qquad (\rho_{j}\neq\rho_{i}) , \\
 & & D\left(\frac{W_{ij}}{\rho_{(i)}^2}\right)=\frac{\Phi_{i}+\Phi_{j}}{\rho_{(i)}} \qquad (\rho_{j}=\rho_{i}) .
\eeqn
Using the above expression for $\Phi_i$, in both cases one can write the solution as
\be
 W_{ij}=\frac{1}{(r-b_{(i)})(r-b_{(j)})}\left[2\int\d r A+(\Phi^0_{i}+\Phi^0_{j})r+ W^0_{ij}\right] .\label{Wij_nondeg}
\ee

Now, imposing $-2\Phi_i=\sum_jW_{ij}$ {(recall~(\ref{W_phi}))} we obtain the constraint
\be
 -2(A+\Phi^0_{i})=\sum_{j\neq i}\frac{1}{r-b_{(j)}}\left[2\int\d r A+(\Phi^0_{i}+\Phi^0_{j})r+ W^0_{ij}\right] .
 \label{constrW}
\ee
{For $r\to\infty$, using \eqref{A_decomposed} and \eqref{am} at 
the leading order (\ref{constrW}) implies that all  $\Phi^0_{i}$ coincide, i.e., }
\be
 \Phi^0_{i}=\frac{1}{n-2}\sum_k\Phi^0_k\equiv{f^0}. 
 \label{Phi_i0}
\ee
$A$ thus becomes
\be
 A=-f^0+A^0\prod_k\frac{1}{r-b_k} ,
 \label{A_2}
\ee
where we have used $\sum_j\prod_{l\neq j}(r-b_l)=D[\prod_k(r-b_k)]$, so that
\be
 \Phi_{i}=\frac{A^0}{r-b_{i}}\prod_k\frac{1}{r-b_k} , \qquad W_{ij}=\frac{1}{(r-b_{(i)})(r-b_{(j)})}\left[2A^0r\prod_k\frac{1}{r-b_k}+ W^0_{ij}\right] .
 \label{nondeg_Weyl_simplif}
\ee

The constraint~(\ref{constrW}) can thus be written as
\be
 -2A^0\prod_k\frac{1}{r-b_k}=\sum_{j\neq i}\frac{1}{r-b_{(j)}}\left[2A^0\int\d r\prod_k\frac{1}{r-b_k}+W^0_{ij}\right] \qquad (i=2,\ldots,n-1) .
 \label{constrW2}
\ee

{It is now useful to discuss separately various subcases with different multiplicity of eigenvalues.} 

\subsubsection{All the $b_i$ coincide: shearfree congruences (Robinson-Trautman spacetimes)}

In the \RT case all the $b_i$ {of (\ref{rho_i})} coincide {(and can be set to zero by shifting $r$, if desired)} and {from~(\ref{constrW2})} we simply obtain $\sum_{j}W^0_{ij}=0$. 

On the other hand, let us assume in the following that not all $b_i$ coincide, i.e., at least two of these are distinct, say $b_2\neq b_3$. 

\subsubsection{Shearing case with all $b_i$ distinct: not permitted}

First, if all  $b_i$ are distinct, {i.e., $\alpha_{max}=1$ in \eqref{A_decomposed}}, we can compute explicitly the required integral using partial fraction decomposition 
\be
 \int\d r\prod_k\frac{1}{r-b_k}=\sum_k\frac{\ln(r-b_k)}{\prod_{l\neq k}(b_k-b_l)} \qquad \mbox{($b_k$ all distinct)}.
\ee
It is thus clear that the singularity structure of the l.h.s. and the r.h.s. of (\ref{constrW2}) (for any $i$) cannot be the same unless $A^0=0$. Therefore, {from~(\ref{constrW2})} also $\sum_{j}(r-b_{(j)})^{-1}W^0_{ij}=0$. However, this condition  implies that  $b_i$ cannot be all distinct, so that this case is in fact not permitted (unless $W^0_{ij}=0$, so that $W_{ij}=0$ and therefore also $\Phi_{ij}=0$, contrary to our assumptions here -- however, we will see in sections~\ref{subsec_nondeg_Phi3} and \ref{subsec_nondeg_Phi4}  below that the case with all distinct $b_i$ is ruled out also for $W^0_{ij}=0$).

\subsubsection{Shearing case with at least one $b_i$ repeated}
\label{subsec_onerep}

Let us now consider the case when at least one $b_i$ is repeated, say $b_2$, with multiplicity $1<\alpha_2<n-2$, and there exists $b_3\neq b_2$ (this is not a restriction since we are excluding here the \RT case, which corresponds to $\alpha_2=n-2$). 

Using partial fraction decomposition {similarly as in}~\eqref{A_decomposed}, the left and right hand sides {of (\ref{constrW2}) take} the form (for $i=3$)
\bea
\mbox{lhs}&=&
-2A^0\left[\frac{p_{\alpha_2}}{(r-b_2)^{\alpha_2}}+(\mbox{terms with lower order poles at } b_2 \mbox{ and with no poles at }b_2)\right],\nonumber\\
\mbox{rhs}&=&
 -2A^0\frac{p_{\alpha_2}\alpha_2}{(\alpha_2-1)(r-b_2)^{\alpha_2}} +(\mbox{terms with lower order poles at } b_2 \mbox{ and with no poles at }b_2) , \nonumber
\eea
respectively, where $p_{\alpha_2}\not= 0$. Comparing the highest order terms in $1/(r-b_2)$ we thus  get
\be
 A^0=0, \qquad \sum_{j}\frac{W^0_{ij}}{r-b_{(j)}}=0 .
\ee

In particular, it follows from (\ref{nondeg_Weyl_simplif}) that 
\be 
 \Phi_i=0 ,
\ee
so that the Weyl type is II(abd), and \eqref{Wij_nondeg} thus gives 
\be
 W_{ij}=\frac{W^0_{ij}}{(r-b_i)(r-b_j)} ,
 \label{Wij}
\ee
with $W^0_{ij}=W^0_{ji}$ and $W^0_{(i)(i)}=0$.

Now, because of (\ref{Wij}) the condition $\sum_jW_{ij}=0$ constraints the possible multiplicities of  $b_i$ (i.e., of the eigenvalues of $\rho_{ij}$). First, if $b_i$ are all distinct, we immediately get $W_{ij}^0=0$, {as already discussed above}. Similarly, recalling $W_{ij}^0=W_{ji}^0$ it is easy to see that for $W_{ij}^0\neq0$ the structures  $\{a,a,c_1,\ldots,c_{n-4}\}$ and  
 $\{a,a,a,c_1,\ldots,c_{n-5}\}$ are also forbidden.\footnote{To see that 
 $\{a,a,a,c_2,\ldots,c_{n-5}\}$ cannot occur, assume $b_2=b_3=b_4$ and all remaining eigenvalues are single. Then $\sum_jW_{ij}=0$ with (\ref{Wij}) gives
$W^0_{i2}+W^0_{i3}+W^0_{i4}=0$ and $W^0_{i\mu}=0$ ($\mu\neq 2,3,4$). Since $W^0_{ij}=W^0_{ji}$, this implies that the only possible non-zero components are $W_{23}$, $W_{24}$ and $W_{34}$, with the conditions $W_{23}^0+W_{24}^0=0$, $W_{32}^0+W_{34}^0=0$ and $W_{42}^0+W_{43}^0=0$. However, this system admits only the solution $W_{23}^0=W_{24}^0=W_{34}^0=0$, so that $W_{ij}=0$.\label{foot_311}}  
However, the structure  
$\{a,a,b,b,c_1,\ldots,c_{n-6}\}$ is compatible with such constraints (take, e.g., $b_2=b_3$, $b_4=b_5$ and $W_{34}^0=-W_{35}^0=W_{25}^0=-W_{24}^0\neq0$, {and the remaining $W_{ij}$ equal to zero}). 

To summarize, we have seen above that in type II Einstein spacetimes with a non-degenerate non-twisting mWAND $\bl$, one has $\Phia_{ij}=0$ and
\begin{enumerate}
 \item if $\Phi_{ij}\neq0$, then $\bl$ must be shearfree and the corresponding spacetimes belong to the \RT class. {This includes, in particular, the result of \cite{OrtPraPra09b} for asymptotically flat type II vacuum spacetimes (restricted to non-twisting case)}\footnote{Ref.~\cite{OrtPraPra09b} used an expansion method. Instead of the condition $\Phi_{ij}\neq 0$, a condition on the asymptotic fall-off behaviour of the Weyl tensor was assumed there. In the present notation, that amounts to taking $W_{ij}^0=\Phi^{\{3\}}_{ijkl}=\Phi^{\{4\}}_{ijkl}=0$, which requires that $\Phi_{ij}\neq 0$ (otherwise the type would be III). The assumptions of \cite{OrtPraPra09b} in the non-twisting case were thus stronger than those used here (note indeed that \RT spacetimes with $W_{ij}^0\neq 0$ do exist \cite{PodOrt06}). {(The symbols $\Phi^{\{3\}}_{ijkl}$ and $\Phi^{\{4\}}_{ijkl}$ are defined in sections~\ref{subsec_nondeg_Phi3} and \ref{subsec_nondeg_Phi4} below.)}} 
\item if $\Phi_{ij}=0$ (type II(abd)) the structure of eigenvalues of $\rho_{ij}$ can be more generic, however for $\Phi_{ij}=0\neq W_{ij}$ it must have the multiplicities $\{a,a,b,b,c_1,\ldots,c_{n-6}\}$ (or more special). 
\end{enumerate}

Let us now discuss the remaining cases (in which, without loss of generality, we could assume $\Phi_{ij}=0=W_{ij}$ --  however this will actually not be used to prove the following results).

\subsection{Case $\Phi^{\{3\}}_{ijkl}\neq0$} 

\label{subsec_nondeg_Phi3}

{Here we assume that a component of $\Phi_{ijkl}$ with precisely three distinct values of $i,j,k,l$ is non-vanishing,
i.e., for some $i\not=k$, we have $\Phi_{i(j)k(j)} \not=0$.} This case is possible {only for $n\ge6$} because we need at least three possible distinct values for $i,j,\ldots$ and because of the tracefree condition $\Phi_{ijik}=0$ with $k\neq j$ (however, {for $n=4,5$ necessarily $\Phi_{ij}\neq0$} and from the above discussion one already knows that \RT is the only possibility).

Eq.~\eqref{B4}  gives
\be
 \Phi_{(i)(j)(i)(k)}(\rho_{(j)}-\rho_{(k)})=0 \qquad (k\neq j) .\label{B4offdiag}
\ee
Without loss of generality, we can assume $\Phi_{2324}\neq0$, so that necessarily $b_3=b_4$. 

Now, the off-diagonal ($i\not=k$) components of \eqref{Bi3} read
\be
-D \Phi_{i(j)k(j)} = \Phi_{i(j)k(j)} \rho_{(j)} + \Phi_{i(j)k(j)} \rho_{(k)}.
\ee
Taking into account $\rho_k = (r-b_k)^{-1}$, we arrive at
\be
\Phi_{i(j)k(j)} = \frac{\Phi^{0}_{i(j)k(j)}}{(r-b_{(j)})(r-b_{(k)})}  \qquad (i \not= k). \label{Phi_ijkj}
\ee

Then,    
the condition $\Phi_{ijkj}=0$, $k\not=i$, implies that at least another component $\Phi_{(i)3(i)4}$ must be non-zero, say $\Phi_{5354}\neq0$, and that $b_5=b_2$. The eigenvalue structure must therefore again be  
$\{a,a,b,b,c_1,\ldots,c_{n-6}\}$ (or more special).

\subsection{Case $\Phi^{\{4\}}_{ijkl}\neq0$} 

\label{subsec_nondeg_Phi4}

Here we assume that there exists a non-vanishing component of $\Phi_{ijkl}$ with all values of $i, j, k, l$ being distinct. This case is possible only for $n\ge6$. 

Similarly as in the previous case, eq.~\eqref{Bi3} implies
\be
 \Phi_{ijkl} = \frac{\Phi^{0}_{ijkl}}{(r-b_{(k)})(r-b_{(l)})},  \qquad  {\mathrm{\ (for \ }} i,j,k,l {\mathrm{\ all \ distinct}}). 
\ee
However, since $\Phi_{ijkl}=\Phi_{klij}$ we obtain $b_i=b_k$ and $b_j=b_l$ (or $b_i=b_l$ and $b_j=b_k$) and the eigenvalue structure is again 
$\{a,a,b,b,c_1,\ldots,c_{n-6}\}$ (or more special). 

\subsection{Summary}

No further cases are possible, since $\Phi^{\{4\}}_{ijkl}=\Phi^{\{3\}}_{ijkl}=W_{ij}=0$ implies the type III. 
To summarize, {in sections~\ref{subsec_nondeg_Phi}--\ref{subsec_nondeg_Phi4}} we have thus shown that
\begin{prop}

\label{prop_non_deg2}

For type II  Einstein spacetimes, {the existence of} a non-twisting, non-degenerate double WAND implies that the algebraic type is necessarily II(d), i.e. $\Phia_{ij}=0$, or more special, and

\begin{enumerate}

\item if $\Phi_{ij}\neq 0$ the spacetime is shearfree (Robinson-Trautman), i.e. the eigenvalue structure of $\rho_{ij}$ is 
 $\{a,\ldots,a\}$ {(with $a\neq0$)}. This is {the only possibility when} $n=4,5$.\footnote{Since type II(abd) coincides with type III in those dimensions.} 
One has $\Phi_{ij}=A^0\delta_{ij}/(r-b_0)^{n-1}$, so that the type is D(bd) 
(one can set $b_0=0$ by a shift of the affine parameter $r$).

\item if $\Phi_{ij}=0$ {(type II(abd))} the structure is  
$\{a,a,b,b,c_1,\ldots,c_{n-6}\}$ (or more special; $a,b,c_\alpha\neq0$). In particular, in six dimensions this means 
$\{a,a,b,b\}$ (see also section~\ref{sec_6D}).

\end{enumerate}
\end{prop}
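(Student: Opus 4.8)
The plan is to reduce everything to a simultaneous diagonalization of $\rho_{ij}$ and $\Phi_{ij}$ and then read off the eigenvalue structure from the pole structure in $r$ that the Bianchi identities \eqref{Bi2}--\eqref{Bi3} impose. First I would establish that the type must be II(d): since $\rhob$ is assumed non-degenerate, point~(i) of theorem~\ref{prop_GSHD} (which forces $m=2$ whenever $\Phia_{ij}\neq0$) immediately yields $\Phia_{ij}=0$. With $A_{ij}=0$ already holding in the non-twisting case, the commutator relation $[\rhob,\Phib^S]=0$ lets me pass to a frame in which both $\rho_{ij}$ and $\Phi_{ij}$ are diagonal and which is compatible with parallel transport, so that $D=\partial_r$; integrating the Sachs equation \eqref{Sachs} then gives the eigenvalues $\rho_i=1/(r-b_i)$ as in \eqref{rho_i}.

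Next I would split the argument according to which boost-weight-zero Weyl components survive, the organizing remark being that $\Phi^{\{4\}}_{ijkl}=\Phi^{\{3\}}_{ijkl}=W_{ij}=0$ would force type III, so for genuine type~II at least one of these is non-zero. In the case $\Phi_{ij}\neq0$ I would integrate \eqref{Bi2} to obtain $\Phi_i=\rho_i(A+\Phi^0_i)$ with $A(r)$ solving a first-order linear ODE whose solution \eqref{A_nondeg} I rewrite by partial fractions, integrate \eqref{Bi3} for $W_{ij}$, and impose the trace identity \eqref{W_phi}. Matching the $r\to\infty$ behaviour forces all $\Phi^0_i$ equal, and \emph{comparing the order of the poles at a repeated $b_i$} in the resulting constraint \eqref{constrW2} forces $A^0=0$ \emph{unless} all $b_i$ coincide. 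Since $A^0=0$ in turn gives $\Phi_i=0$, the standing assumption $\Phi_{ij}\neq0$ leaves only the \RT branch $\{a,\dots,a\}$; back-substitution then yields $\Phi_{ij}=A^0\delta_{ij}/(r-b_0)^{n-1}$, hence type D(bd).

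For the complementary subcases I would show that each forces the structure $\{a,a,b,b,c_1,\dots,c_{n-6}\}$ (or more special). When $\Phi_{ij}=0\neq W_{ij}$, the reduced form $W_{ij}=W^0_{ij}/[(r-b_i)(r-b_j)]$ together with $\sum_j W_{ij}=0$ and the symmetry $W^0_{ij}=W^0_{ji}$ pins down the admissible multiplicities; here a short combinatorial argument is needed to exclude $\{a,a,c_1,\dots\}$ and $\{a,a,a,c_1,\dots\}$ while permitting the double--double pattern. When $\Phi^{\{3\}}_{ijkl}\neq0$ or $\Phi^{\{4\}}_{ijkl}\neq0$, I would instead use the off-diagonal part of \eqref{B4} (which forces pairs $\rho_j=\rho_k$) together with the index symmetries $\Phi_{ijkl}=\Phi_{klij}$ and the tracefree condition $\Phi_{ijik}=0$ to pair up the non-zero eigenvalues two-by-two, again producing $\{a,a,b,b,c_1,\dots,c_{n-6}\}$.

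The main obstacle I expect is the pole-structure bookkeeping in the case $\Phi_{ij}\neq0$: it is the partial-fraction manipulation of \eqref{A_nondeg}--\eqref{constrW2} and the careful comparison of the highest-order poles at a repeated $b_i$ that actually forces $A^0=0$ (and thereby rules out shear in this branch), and keeping track of which terms contribute at the leading singular order is the delicate point. By comparison, the combinatorial exclusion of the forbidden multiplicity patterns and the index-symmetry arguments in the $\Phi^{\{3\}}$ and $\Phi^{\{4\}}$ cases are routine once the $r$-dependence of the surviving Weyl components has been integrated.
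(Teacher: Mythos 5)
Your proposal follows essentially the same route as the paper's proof in sections~\ref{subsec_nondeg_Phi}--\ref{subsec_nondeg_Phi4}: deducing $\Phia_{ij}=0$ from point~(\ref{PhiA}) of theorem~\ref{prop_GSHD}, simultaneously diagonalizing $\rho_{ij}$ and $\Phi_{ij}$ in a parallelly transported frame, integrating \eqref{Bi2}--\eqref{Bi3} and forcing $A^0=0$ (hence the Robinson--Trautman branch when $\Phi_{ij}\neq0$) via the pole/singularity structure of \eqref{constrW2}, and then handling the $W_{ij}$, $\Phi^{\{3\}}_{ijkl}$ and $\Phi^{\{4\}}_{ijkl}$ cases by the trace and index-symmetry arguments that pair the eigenvalues. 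The only minor omission is that for the subcase where all $b_i$ are distinct the paper's argument for $A^0=0$ rests on the logarithmic terms produced by $\int\d r\prod_k(r-b_k)^{-1}$ rather than on pole-order comparison, but this does not change the substance of the proof.
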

This is nothing but a more detailed version of points (\ref{RT}) and (in the subcase $\det\rhob\neq0$) (\ref{Phi0}) of theorem~\ref{prop_GSHD} of section~\ref{sec_intro}, which are thus proven (the proof of  (\ref{Phi0}) will be completed in section~\ref{sec_nontwist} for any value of rank($\rhob$), cf. remark~\ref{rem_Phi0}).
In particular, (for $n>5$) there are always at least two double eigenvalues. 
In this sense, we will see that the situation is different in the degenerate case. {Note that spacetimes of point 1. (all explicitly known \cite{PodOrt06}) obey the optical constraint, whereas those of point 2. do not, in general. Examples of the latter in $n\ge7$ dimensions will be provided in appendix~\ref{app_violating}.}

\section{Non-twisting $\bl$: degenerate $\rhob$ ($0<m<n-2$) }

\label{sec_nontwist}

Let $m$ denote the rank of $\rho_{ij}$. The value $m=n-2$ corresponds to the previously considered non-degenerate case, while $m=0$ defines Kundt. Therefore here we shall restrict to $0<m<n-2$ .
We need to define two types of indices to distinguish between non-vanishing and vanishing eigenvalues, namely
\beqn
 & & \rho_p=\frac{1}{r-b_p} \qquad (p,q,o,t=2,\ldots, m+1) , \\
 & & \rho_z=0 \qquad (z,v,w,y=m+2,\ldots, n-1) . \label{rho_z_0}
\eeqn

Recall that (point (i) of theorem~\ref{prop_GSHD}) 
in the non-twisting case with $\Phi^A_{ij}\neq0$ one has $m=0,2$ and $\rho_{ij}=\mbox{diag}(a ,a ,0,\ldots,0)$ for any $n>4$, {so that this case does not require further investigation}. 
In all remaining cases we thus have 
\be
 \Phi^A_{ij}=0 \qquad (m\neq 0,2).
\ee
{In the following we will give the $r$-dependence of the Weyl components, which is then used to constraint the possible forms of $\rhob$. In particular, we shall explore under what conditions  all the eigenvalues of $\rhob$ {can} be distinct (which is not permitted in the non-degenerate case). We shall also discuss some special cases and construct explicit examples.}

\subsection{$\Phi_{ij}$ and $W_{ij}$ components}

\label{subsec_Phi_W_deg}

Proceeding similarly as in section~\ref{sec_nondeg} but {(thanks to (\ref{rho_z_0}))} with the additional equation $D\Phi_z=0$, one finds
\beqn
 & & \Phi_{p}=\rho_p(A+\Phi^0_{(p)}) , \qquad \Phi_{z}=\Phi_{z}^0 , \\
 & & W_{pq}=\frac{1}{(r-b_{(p)})(r-b_{(q)})}\left[2\int\d r A+(\Phi^0_{p}+\Phi^0_{q})r+ W^0_{pq}\right] , \\
 & & W_{pz}=\frac{1}{r-b_{(p)}}(\Phi^0_zr+W_{pz}^0) , \qquad W_{zv}=W_{zv}^0 ,
\eeqn
where 
\be
 A=\left(\prod_o\frac{1}{r-b_o}\right)\left[-\sum_p\Phi^0_{p}\int\d r\prod_{q\neq p}(r-b_q)-\sum_z\Phi^0_z\int\d r\prod_{q}(r-b_q)+A^0\right] .
 \label{A_degen}
\ee

Now, imposing $-2\Phi_z=\sum_pW_{zp}+\sum_vW_{zv}$ we obtain the constraints
\be
 \sum_vW_{zv}^0=-(m+2)\Phi^0_z , \qquad \sum_p\frac{\Phi^0_zb_p+W^0_{pz}}{r-b_p}=0 .
 \label{constr_Phi_z}
\ee

Next, imposing $-2\Phi_q=\sum_pW_{qp}+\sum_zW_{qz}$ gives 
\be
 -2(A+\Phi^0_{q})=\sum_{p\neq q}\frac{1}{r-b_{(p)}}\left[2\int\d r A+(\Phi^0_{q}+\Phi^0_{p})r+ W^0_{pq}\right]+\sum_z(\Phi^0_zr+W^0_{qz}) .
 \label{constrW_deg}
\ee
By comparing the leading terms of the l.h.s. and of the r.h.s. for $r\to\infty$ one obtains 
\be
 m\Phi^0_{q}+\frac{1}{m+1}b_q\sum_z\Phi^0_z+\sum_zW^0_{qz}=\sum_p\Phi^0_{p}-\frac{1}{m+1}\sum_pb_p\sum_z\Phi^0_z .
 \label{constrW_deg_leading}
\ee

\subsection{$\Phi^{\{3\}}_{ijkl}$ and $\Phi^{\{4\}}_{ijkl}$ components} 

\label{subsec_Phi3_4_deg}

To complete the description of the Weyl tensor, we now give the general form of the $\Phi^{\{3\}}_{ijkl}$ and $\Phi^{\{4\}}_{ijkl}$ components (recall that these are non-zero only for $n\ge6$). In particular, we also discuss constraints following from the assumption that all  non-vanishing eigenvalues of $\rho_{ij}$ are distinct (useful for later analysis). 

\subsubsection{$\Phi^{\{3\}}_{ijkl}$ components}
\label{subsubPhi3deg} 

Eq.~\eqref{B4}  gives
\be
 \Phi_{(o)(p)(o)(q)}(\rho_{(q)}-\rho_{(p)})=0=\Phi_{(z)(p)(z)(q)}(\rho_{(q)}-\rho_{(p)}) , \qquad \Phi_{(q)p(q)z}=0=\Phi_{(v)p(v)z} , \quad (p\neq q, \ v\neq z) .
 \label{A12_deg_Phi3}
\ee
Non-vanishing components must satisfy the tracefree conditions (recall that $\Phi_{ij}$ is diagonal in the frame we are using)
\be
 \Phi_{poqo}+\Phi_{pzqz}=0 , \qquad \Phi_{zpvp}+\Phi_{zwvw}=0 \qquad (p\neq q, \ v\neq z) .
 \label{trace_deg_Phi3}
\ee

The $r$-dependence, following from~\eqref{Bi3}, is
\beqn
 & & \Phi_{(p)z(p)v}=\frac{\Phi_{(p)z(p)v}^0}{r-b_{(p)}} \quad (v\neq z), \qquad \Phi_{(z)v(z)w}={\Phi_{(z)v(z)w}^0} \quad (v\neq w) , \nonumber \\
 & & \Phi_{(p)q(p)o}=\frac{\Phi_{(p)q(p)(o)}^0}{(r-b_{(p)})(r-b_{(o)})} ,  \quad (q\neq o) \qquad  \Phi_{(z)p(z)q}=\frac{\Phi_{(z)p(z)(q)}^0}{r-b_{(q)}}  \quad (p\neq q) .
 \label{r_deg_Phi3}
\eeqn
From~(\ref{trace_deg_Phi3}), the quantities in~(\ref{r_deg_Phi3}) must satisfy the constraints
\beqn
 & & \Phi_{pzqz}^0=0 , \qquad \sum_o\frac{\Phi^0_{p(o)q(o)}}{r-b_{(o)}}=0 \qquad (p\neq q) , \label{r_deg_Phi3_constr_0}\\
 & & \Phi^0_{zwvw}=0 , \qquad \sum_p\frac{\Phi_{z(p)v(p)}^0}{r-b_{(p)}}=0 \qquad (v\neq z) . \label{r_deg_Phi3_constr}
\eeqn

Note that, by the symmetries of the Weyl tensor, {from} the {third} equation of~(\ref{r_deg_Phi3}) it follows that 
$\Phi_{(p)q(p)o}\neq0\Rightarrow b_q=b_o$ (for $q\neq o$), in agreement with~(\ref{A12_deg_Phi3}); using also the {second} equation of~\eqref{r_deg_Phi3_constr_0} we get
\be
 \Phi^{\{3\}}_{(p)q(p)o}\neq0 \Rightarrow \rhob=\diag(a,a,b,b,c_1\dots) \qquad (a,b\neq0). 
 \label{Phi3_new}
\ee
Further, {the fourth} equation of~(\ref{r_deg_Phi3}) gives $\Phi_{(z)p(z)q}\neq0\Rightarrow b_q=b_p$ {(for $q\neq p$)}, i.e., the structure is $\{a,a,c_1\dots\}$ with $a\neq0$, in agreement with~(\ref{A12_deg_Phi3}). For $\Phi_{z(p)v(p)}\neq0$, {the second equation of}~\eqref{r_deg_Phi3_constr} also implies $\{a,a,c_1\dots\}$ with $a\neq0$.

We are interested, in particular, in determining what are the necessary conditions in order to have all the non-vanishing eigenvalues distinct. From the {above} observations it follows that one necessarily has $\Phi_{(p)q(p)o}=\Phi_{(z)p(z)q}=\Phi_{z(p)v(p)}=0$, therefore the only non-zero $\Phi^{\{3\}}_{ijkl}$ components can be the $\Phi_{z(w)v(w)}$, however with the constraint $\Phi^0_{zwvw}=0$ (eq.~(\ref{r_deg_Phi3_constr})).  
It is easy to see that for this to be satisfied in a non-trivial way,  indices $z,v\ldots$  must run at least over four values, i.e. there must be at least four zero eigenvalues of $\rho_{ij}$ (excluding Kundt, the spacetime must thus be at least seven-dimensional). In other words:
\begin{itemize}
 \item all non-zero $\rho_{ij}$ are distinct and $m>n-6$ $\Rightarrow$ $\Phi^{\{3\}}_{ijkl}=0$. 
\end{itemize}

\subsubsection{$\Phi^{\{4\}}_{ijkl}$ components}
\label{subsubPhi4deg} 

From~\eqref{Bi3} and the symmetries of the Weyl tensor one finds
\beqn
 & & \Phi_{pqot}=\frac{\Phi_{pq(o)(t)}^0}{(r-b_{(o)})(r-b_{(t)})} , \qquad \Phi_{pzqw}=\frac{\Phi_{pz(q)w}^0}{r-b_{(q)}}  , \qquad \Phi_{zvwy}=\Phi_{zvwy}^0 , \\
 & & \Phi_{pqoz}=0 , \qquad \Phi_{pqzw}=0, \qquad \Phi_{pzvw}=0 .
 \label{r_deg_Phi4}
\eeqn
Similarly as in the case of $\Phi^{\{3\}}_{ijkl}$, we observe that 
{if} $\Phi_{pqot}\neq0$ then $b_t=b_q$ and $b_o=b_p$ (or $b_t=b_p$ and $b_o=b_q$), i.e., 
\be
 \Phi^{\{4\}}_{pqot}\neq0 \Rightarrow \rhob=\diag(a,a,b,b,c_1\dots) \qquad (a,b\neq0). 
 \label{Phi4_new}  
\ee

{Additionally,} {if} $\Phi_{pzqw}\neq0$ then 
$b_q=b_p$, {so that the structure is} { $\{a,a,c_1\dots\}$ with  $a\neq0$}.

Having  all the non-vanishing eigenvalues distinct thus requires $\Phi_{pqot}=0=\Phi_{pzqw}$ and the only non-zero components can be  $\Phi_{zvwy}$. Therefore we conclude again that  indices $z,v\ldots$  must run at least over four values unless $\Phi^{\{4\}}_{ijkl}=0$ (and, again, excluding Kundt, the spacetime must thus be at least seven-dimensional), i.e., 
\begin{itemize}
 \item all non-zero $\rho_{ij}$ are distinct and $m>n-6$ $\Rightarrow$ $\Phi^{\{4\}}_{ijkl}=0$. 
\end{itemize}

\subsection{Case $\Phi_{ij}=0$ (type II(abd))}

\label{subsec_Phi=0}

This case (non-trivial only for $n>5$) is obtained by setting
\be
 A=-\Phi^0_p=K^0 , \qquad \Phi^0_z=0 
\ee
{in the results obtained in~\ref{subsec_Phi_W_deg}.}
For the $W_{ij}$ components one thus has
\be
 W_{pq}=\frac{W^0_{pq}}{(r-b_{(p)})(r-b_{(q)})} , \qquad W_{pz}=\frac{W_{(p)z}^0}{r-b_{(p)}} , \qquad W_{zv}=W_{zv}^0 ,\label{Phi=0_Wpq}
\ee
with
\be
 \sum_vW_{zv}^0=0 , \qquad \sum_p\frac{W^0_{(p)z}}{r-b_{(p)}}=0 , \qquad \sum_zW^0_{qz}=0 , \qquad \sum_{p}\frac{W^0_{(p)q}}{r-b_{(p)}}=0 .
 \label{Phi=0_constr}
\ee

In view of these constraints (and of the properties of $W_{ij}$) we can briefly comment on some special cases:

\begin{itemize}
 \item $\Phi_{ij}=0$, $m=1$ $\Rightarrow$ $W_{pq}=0=W_{pz}$, 
 \item $\Phi_{ij}=0$, $m=2$ or $m=3$ $\Rightarrow$ $W_{pq}=0$, 
 \item $\Phi_{ij}=0$, $m=n-3$ $\Rightarrow$ $W_{zv}=0=W_{pz}$, 
 \item $\Phi_{ij}=0$, $m=n-4$ or $m=n-5$ $\Rightarrow$ $W_{zv}=0$. 
\end{itemize}

For special values of $n$ and $m$ some of these can hold simultaneously, thus leading to $W_{ij}=0$. Recalling the trivial implications $m=0\Rightarrow W_{pq}=0=W_{pz}$ and $m=n-2\Rightarrow W_{zv}=0=W_{pz}$ (valid also for $\Phi_{ij}\neq 0$), we have in particular 
\begin{itemize}
 \item for $n=4$ and $m=0,1,2$, $\Phi_{ij}=0$ $\Rightarrow$ $W_{ij}=0$, 
 \item for $n=5$ and $m=0,1,2,3$, $\Phi_{ij}=0$ $\Rightarrow$ $W_{ij}=0$, 
 \item for $n=6$ and $m=1,3$, $\Phi_{ij}=0$ $\Rightarrow$ $W_{ij}=0$. 
\end{itemize}
While the first two implications simply reproduce the known result that $\Phi_{ij}=0\Leftrightarrow W_{ij}=0$ for $n=4,5$ (for any permitted $m$), the last remark will be useful for later purposes.

Using \eqref{Phi=0_Wpq}, \eqref{Phi=0_constr} and a reasoning similar to that of section \ref{subsec_onerep} (in the paragraph after eq.~\eqref{Wij}, including footnote~\ref{foot_311}) one can also show that
\be
 W_{pq}\neq0 \Rightarrow \rhob=\diag(a,a,b,b,c_1\dots) \qquad (a,b\neq0),  
 \label{W_new}
\ee
which will be useful in the following.

\subsubsection{No repeated non-zero eigenvalues}

\label{subsubsec_Phi=0_norepeated}

In case all the $b_p$ are distinct, {equations~(\ref{Phi=0_constr})} imply
\be
 W^0_{pz}=0=W^0_{pq} ,
\ee
so that the only non-zero components of $W_{ij}$ are the $W_{zv}=W_{zv}^0$, with $\sum_vW_{zv}^0=0$. This implies that, in order to have $W_{ij}\neq0$, one needs that  indices $z,v,\ldots$ run at least over  four values, i.e., the cases $m=n-3$, $m=n-4$ and $m=n-5$ imply $W_{ij}=0$. Thus we have proven: $\Phi_{ij}=0$, all non-zero $\rho_{ij}$ are distinct and $m>n-6$ $\Rightarrow$ $W_{ij}=0$. {Together with the results of subsection~\ref{subsec_Phi3_4_deg}, we see that if such assumptions hold all boost weight zero components must actually vanish, so that we can conclude}

\begin{prop} 
\label{prop_IIabd}
For type II(abd) ($\Phi_{ij}=0$) non-twisting Einstein spacetimes with (degenerate) $\rhob$ of rank $n-6<m<n-2$, the eigenvalue structure of $\rhob$ is 
$\{a,a, c_1\ldots, 0,\ldots\}$ (with $a\neq0$) or more special {(i.e., at least two non-zero eigenvalues of $\rhob$ coincide)}.
\label{prop_Phi=0}
\end{prop}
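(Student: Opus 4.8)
The plan is to prove Proposition~\ref{prop_Phi=0} by combining the $r$-dependence of the Weyl components (already derived in subsection~\ref{subsec_Phi_W_deg}) with the trace-free constraints~(\ref{Phi=0_constr}), and then invoking the results of subsection~\ref{subsec_Phi3_4_deg} to dispose of the remaining components $\Phi^{\{3\}}_{ijkl}$ and $\Phi^{\{4\}}_{ijkl}$. The key assumption is that $n-6<m<n-2$, i.e.\ there are fewer than four vanishing eigenvalues, so the indices $z,v,w,y$ range over at most three values. I would argue by a trichotomy: either (a) some non-zero eigenvalues coincide, in which case the desired conclusion $\{a,a,c_1,\ldots\}$ holds immediately, or (b) all non-zero $\rho_p$ are distinct, and one must show that all boost-weight-zero Weyl components vanish, contradicting the assumption that $\bl$ is an mWAND of a genuine type~II (or type~D) spacetime.

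First I would treat case (b). Under $\Phi_{ij}=0$ and all $b_p$ distinct, I would plug these into the constraints~(\ref{Phi=0_constr}). The partial-fraction identity $\sum_p (r-b_p)^{-1}W^0_{(p)z}=0$ with distinct poles forces each residue to vanish, i.e.\ $W^0_{pz}=0$; the analogous identity in the last equation of~(\ref{Phi=0_constr}) gives $W^0_{pq}=0$. Hence from~(\ref{Phi=0_Wpq}) the only surviving block is $W_{zv}=W^0_{zv}$, constrained by $\sum_v W^0_{zv}=0$ and $W^0_{(z)(z)}=0$. Since the $z$-indices run over fewer than four values, the same linear-algebra argument used in subsection~\ref{subsec_onerep} (the reasoning after~(\ref{Wij}), including footnote~\ref{foot_311}) shows that a trace-free symmetric matrix with vanishing diagonal on at most three indices and all row-sums zero must itself vanish, so $W_{zv}=0$ and therefore $W_{ij}=0$ entirely. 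Combined with the bulleted conclusions of subsection~\ref{subsec_Phi3_4_deg}, namely that all non-zero $\rho_{ij}$ distinct together with $m>n-6$ force $\Phi^{\{3\}}_{ijkl}=0$ and $\Phi^{\{4\}}_{ijkl}=0$, every boost-weight-zero component vanishes, which is impossible for a type~II mWAND. This rules out case (b) and completes the proof.

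For case (a), I would simply note that if two non-zero eigenvalues coincide then the structure is already of the required form $\{a,a,c_1,\ldots\}$ (or more special), so nothing further is needed; the content of the proposition is exactly the statement that at least two non-zero eigenvalues must coincide once $n-6<m<n-2$. I expect the main obstacle to be the bookkeeping in the $W_{zv}$ sub-block: one must verify carefully that the combinatorial constraint ``row-sums zero, zero diagonal, symmetric'' genuinely forces vanishing when the index range is two or three (the case of four values is precisely where nonzero solutions reappear, as noted in subsection~\ref{subsubsec_Phi=0_norepeated}), and that this range count matches the hypothesis $m>n-6$. The translation ``fewer than four zero eigenvalues $\Leftrightarrow$ $z$-indices span at most three values $\Leftrightarrow$ $m>n-6$'' should be stated explicitly to make the dimension counting transparent and to align with the parallel $\Phi^{\{3\}}$ and $\Phi^{\{4\}}$ arguments already established.
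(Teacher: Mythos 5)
Your proposal is correct and follows essentially the same route as the paper: reduce to the case of all distinct non-zero eigenvalues, use the partial-fraction constraints~(\ref{Phi=0_constr}) to kill $W^0_{pz}$ and $W^0_{pq}$, observe that the remaining symmetric, zero-diagonal, zero-row-sum block $W_{zv}$ must vanish when the $z$-indices span fewer than four values (which is exactly the condition $m>n-6$), and then invoke the $\Phi^{\{3\}}_{ijkl}=0=\Phi^{\{4\}}_{ijkl}$ results of subsection~\ref{subsec_Phi3_4_deg} to conclude that all boost-weight-zero components would vanish, contradicting the type~II assumption. The only cosmetic slip is calling your two-case split a ``trichotomy''; the argument itself matches the paper's.
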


The condition on $m$ means that either $m=n-3$ or $m=n-4$ or $m=n-5$. We will see {in section~\ref{subsec_n-3}} below that in the case $m=n-3$ the assumption $\Phi_{ij}=0$ can in fact be dropped {(since $\Phi_{ij}\neq0$ is not possible if all non-zero eigenvalues are distinct).  For $m=n-4$ the eigenvalue structure is thus $\{a,a,0,0,c_1,\ldots,c_{n-6}\}$ and for $m=n-5$ it is $\{a,a,0,0,0,c_1,\ldots,c_{n-7}\}$ (while for any smaller $m$ there are, of course, $n-2-m\ge 4$ vanishing eigenvalues).} 
Recall also that for $m=n-2$ (non-degenerate case) we had a similar {result, cf.~proposition~\ref{prop_non_deg2}.}

\subsection{The special case $m=n-3$}

\label{subsec_n-3}

For $n=4$ this case is not possible since it gives $m=1$ (see section~\ref{subsec_m=1}). 
{In higher dimensions, the simplest spacetimes of this class are obtained by taking a Brinkmann warp {\cite{Brinkmann25} (see also \cite{OrtPraPra11})} of a type II Einstein spacetime possessing a non-degenerate mWAND with a single (spacelike) extra-dimension (cf. also \cite{OrtPraPra11}). They include, e.g., {static} black strings and} this case may thus be of special interest. 

Since there is only one vanishing eigenvalue of $\rho_{ij}$, i.e., $z$ can take only one value, we have $W_{zv}=0$ and, from~(\ref{constr_Phi_z}),
\be
 \Phi^0_z=0 , \qquad \sum_p\frac{W^0_{pz}}{r-b_p}=0 .
 \label{constr_m=n-3}
\ee

Therefore (\ref{constrW_deg_leading}) reduces to
\be
 (n-3)\Phi^0_{q}+W^0_{qz}=\sum_p\Phi^0_{p} .
\ee

From the results of subsection~\ref{subsec_Phi3_4_deg}, when $m=n-3$ the only possible non-zero components of $\Phi^{\{3\}}_{ijkl}$ can be $\Phi_{(p)q(p)o}$, with $q\neq o$ (in particular, $\Phi_{pzqz}=\Phi_{p(z)q(z)}=0$,  see the first of \eqref{r_deg_Phi3_constr_0}), which satisfy
\be
 \sum_o\frac{\Phi^0_{p(o)q(o)}}{r-b_{(o)}}=0 \qquad (p\neq q) , 
\ee
while the only possible non-zero components of $\Phi^{\{4\}}_{ijkl}$ are $\Phi_{pqot}$ ($\Phi_{pqoz}=0$, see the first of \eqref{r_deg_Phi4}) . 
From sections \ref{subsubPhi3deg} and \ref{subsubPhi4deg} (see \eqref{Phi3_new}, \eqref{Phi4_new})     
it thus follows that if $\Phi^{\{3\}}_{ijkl}\neq0$ ($\Leftrightarrow\Phi^{\{3\}}_{(p)q(p)o}\neq0$ here) or $\Phi^{\{4\}}_{ijkl}\neq0$ ($\Leftrightarrow\Phi^{\{4\}}_{pqot}\neq0$ here)  then the eigenvalue structure is  $\{a,a,b,b,c_1\dots\}$, where  $a,b\neq0$. Moreover, it is easy to see that when $m=n-3$ both $\Phi^{\{3\}}_{ijkl}$ and $\Phi^{\{4\}}_{ijkl}$  can be non-zero only for $n\ge7$.  This is of interest only in six dimensions ($\Phi^{\{3\}}_{ijkl}=0=\Phi^{\{4\}}_{ijkl}$ identically for $n=4,5$) and can be summarized as 
\begin{itemize}
 \item $n=6$, $m=3$  $\Rightarrow$ $\Phi^{\{3\}}_{ijkl}=0=\Phi^{\{4\}}_{ijkl}$. 
\end{itemize}

If we additionally assume $\Phi_{ij}=0$ (then also $W_{zv}=0=W_{pz}$, see comments after \eqref{Phi=0_constr}), then some of the components $W_{pq}$, $\Phi^{\{3\}}_{p(o)q(o)}$ or $\Phi^{\{4\}}_{pqot}$ have to be non-vanishing (otherwise the spacetime would be of type III). Then using also results of section~\ref{subsec_Phi=0}  {(eq.~\eqref{W_new})} it follows that type II spacetimes {with $m=n-3$ and $\Phi_{ij}=0$} {must have} the eigenvalue structure $\{a,a,b,b,c_1\dots c_{n-7},0\}$.

\subsubsection{No repeated non-zero eigenvalues}

\label{subsubsec_m=n-3_no_rep}

If all $b_p$ are distinct from the second of~(\ref{constr_m=n-3}) we obtain  
\be
 W^0_{pz}=0 ,
\ee
so that $W_{pz}=0$ and all the $\Phi^0_{q}$ coincide, namely, 
\be
 \Phi^0_{q}=\frac{1}{(n-3)}\sum_p\Phi^0_{p}\equiv f^0 .
\ee
Therefore
\be
 A=-f^0+A^0\prod_p\frac{1}{r-b_p} ,
\ee
and one can basically proceed as in the non-degenerate case. Comparing the structures of the poles of the l.h.s. and r.h.s. of (\ref{constrW_deg}) one again arrives at
\be
 A^0=0, \qquad W^0_{pq}=0 .
\ee
Combining this with the previous results we conclude $\Phi_q=0$, $\Phi_z=0$, $W_{qz}=0$, $W_{qp}=0$, $W_{zv}=0$, i.e.
\be
 \Phi_{ij}=0=W_{ij} .
\ee
Since here $m=n-3>n-6$, we can use the results of subsection~\ref{subsec_Phi3_4_deg} to conclude that also $\Phi^{\{3\}}_{ijkl}=0=\Phi^{\{4\}}_{ijkl}$, so that all b.w. 0 components of the Weyl tensor vanish. Thus we have proven: {\it all non-zero $\rho_{ij}$ are distinct and $m=n-3$ $\Rightarrow$ all b.w. 0 components vanish.}
In other words, {\it in the case $m=n-3$, the eigenvalue structure of $\rhob$ must be $\{a,a,c_1, \dots ,c_{n-5}, 0\}$ 
(or more special)}. If, additionally, $\Phi_{ij}=0$, there is in fact another repeated non-zero eigenvalue \cite{Wylleman_priv}\footnote{We thank Lode Wylleman for pointing this out.} {(see the comment just before section \ref{subsubsec_m=n-3_no_rep}), so that} 
 Proposition~\ref{prop_Phi=0} can accordingly be reformulated as (see also \cite{Wylleman_priv}) 
 
\begin{prop} 
\label{prop_n-3_text}
In a type II Einstein spacetime of dimension $n>4$ with a non-twisting multiple WAND of rank $m=n-3$, the eigenvalue structure of $\rhob$ is $\{a,a,c_1,\ldots c_{n-5},0\}$ where $a,c_\alpha  \ne 0$ (and necessarily $\Phi^A_{ij}=0$ if $n>5$). If, additionally, $\Phi_{ij}=0$, the structure becomes $\{a,a,b,b,c_1,\ldots c_{n-7},0\}$ where $a,b,c_\alpha  \ne 0$.
\end{prop}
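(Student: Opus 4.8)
The plan is to combine the specialized machinery already developed for degenerate $\rhob$ with rank $m=n-3$. First I would establish the first assertion, that the structure is $\{a,a,c_1,\ldots,c_{n-5},0\}$ with all $c_\alpha\neq 0$, by handling two disjoint regimes. If some non-zero eigenvalue $b_p$ is repeated, then by definition at least two of the non-zero eigenvalues coincide and we are done immediately. If instead all the $b_p$ are distinct, I would invoke subsection~\ref{subsubsec_m=n-3_no_rep}, which shows that all boost-weight zero components of the Weyl tensor vanish; but this would force the type to be III or more special, contradicting our standing assumption that $\bl$ is a multiple WAND of a genuine type~II (or D) spacetime. Hence the ``all distinct'' case is impossible, and at least one repeated non-zero eigenvalue must occur, giving the desired structure. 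The statement $\Phi^A_{ij}=0$ for $n>5$ is inherited directly from point~(i) of theorem~\ref{prop_GSHD}: if $\Phi^A_{ij}\neq0$ then $m=2$, which for $n>5$ is incompatible with $m=n-3>2$.

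Next I would prove the sharper second assertion under the additional hypothesis $\Phi_{ij}=0$ (type II(abd)). The key input here is that $\Phi_{ij}=0$ together with $m=n-3$ forces $W_{zv}=0=W_{pz}$ by the special-case implications listed just after~\eqref{Phi=0_constr} (since $z$ ranges over a single value). Therefore the only boost-weight zero components that can remain non-trivial are $W_{pq}$, $\Phi^{\{3\}}_{(p)q(p)o}$ and $\Phi^{\{4\}}_{pqot}$. Because the spacetime is genuinely type~II, at least one of these must be non-vanishing, for otherwise all b.w.~0 components vanish and the type would degenerate to III. Now each of these three possibilities carries a structural implication already derived in the excerpt: $W_{pq}\neq0$ forces $\rhob=\diag(a,a,b,b,c_1,\dots)$ by~\eqref{W_new}; $\Phi^{\{3\}}_{(p)q(p)o}\neq0$ forces the same by~\eqref{Phi3_new}; and $\Phi^{\{4\}}_{pqot}\neq0$ forces it by~\eqref{Phi4_new}. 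In every case two pairs of non-zero eigenvalues coincide, so combined with the single vanishing eigenvalue we obtain the structure $\{a,a,b,b,c_1,\ldots,c_{n-7},0\}$ with $a,b,c_\alpha\neq0$.

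The main obstacle, and the step deserving the most care, is the logical completeness of the ``genuine type~II'' argument: one must be sure that \emph{every} b.w.~0 component is accounted for and that each non-vanishing possibility has already been tied to a repeated eigenvalue somewhere in sections~\ref{subsec_Phi_W_deg}--\ref{subsec_Phi=0}. This requires carefully tracking which components survive once $m=n-3$ (a single zero eigenvalue) and $\Phi_{ij}=0$ are imposed, so that no loophole component is left that could be non-zero while all eigenvalues stay distinct. I would therefore organize the argument as an exhaustive case analysis over the surviving component types $\{W_{pq},\Phi^{\{3\}}_{(p)q(p)o},\Phi^{\{4\}}_{pqot}\}$, citing \eqref{W_new}, \eqref{Phi3_new} and \eqref{Phi4_new} respectively, and noting the dimension count (both $\Phi^{\{3\}}$ and $\Phi^{\{4\}}$ can be simultaneously non-zero only for $n\ge7$) only insofar as it is needed to confirm consistency; the actual eigenvalue conclusion holds uniformly. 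The result then follows, completing the proof of proposition~\ref{prop_n-3_text} and, as noted in the text, the remaining part of point~(\ref{Phi0}) of theorem~\ref{prop_GSHD}.
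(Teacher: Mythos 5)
Your proposal is correct and follows essentially the same route as the paper: the same dichotomy between a repeated non-zero eigenvalue and all-distinct non-zero eigenvalues (the latter excluded because all boost-weight-zero Weyl components would then vanish, contradicting genuine type II), and the same use of \eqref{W_new}, \eqref{Phi3_new} and \eqref{Phi4_new} to upgrade the structure to $\{a,a,b,b,c_1,\ldots,c_{n-7},0\}$ when $\Phi_{ij}=0$. The only caveat is that the substantive analytic work in the all-distinct case --- deriving $W^0_{pz}=0$, $A^0=0$ and $W^0_{pq}=0$ from the pole structure of \eqref{constrW_deg} --- is precisely the content of section~\ref{subsubsec_m=n-3_no_rep} that you cite, so your argument is a faithful reorganization of the paper's proof rather than an alternative to it.
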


\begin{rem}
\label{rem_Phi0}
{Together with propositions~\ref{prop_non_deg2} and \ref{prop_Phi=0}, this shows that, for any possible value of $m$ ($0<m\le n-2$), the mWAND of type II(abd) (i.e., $\Phi_{ij}=0$) Einstein spacetimes has  $\rhob$ of the form $\{a,a,b,b,c_1,\ldots c_{n-6}\}$ (where $a\ne 0$ if $m=n-5$ or $m=n-4$ and $a,b\ne 0$ if $m=n-3$ or $m=n-2$ ).} In particular, point~(\ref{Phi0}) of theorem~\ref{prop_GSHD} is thus now also proven.
\end{rem}

\begin{rem}
\label{rem_5d}
In five dimensions (for which always $\Phi_{ij}\neq0$), recalling also the results for the non-degenerate case {(proposition~\ref{prop_non_deg2})}, this enables us to list all the admitted eigenvalues structures, {for $m=3,2,1,0$, respectively}: $\rho_{ij}=\mbox{diag}(a,a,a)$, 
$\rho_{ij}=\mbox{diag}(a,a,0)$,  
$\rho_{ij}=\mbox{diag}(a,0,0)$,  
or $\rho_{ij}=0$ (Kundt). See \cite{Ortaggioetal12} for more details and examples. 
\end{rem}

\begin{rem}
\label{rem_6d}
In six dimensions, we conclude that the case $m=3$ requires $\rho_{ij}=\mbox{diag}(a,a,b,0)$. However, as shown above in this case one has $\Phi^{\{3\}}_{ijkl}=0=\Phi^{\{4\}}_{ijkl}$ and additionally the implication $\Phi_{ij}=0$ $\Rightarrow$ $W_{ij}=0$ holds (see subsection~\ref{subsec_Phi=0}). Therefore one necessarily has $\Phi_{ij}\neq0$ when $n=6$ and $m=3$ (or all Weyl boost zero components would vanish). {See section~\ref{sec_6D} for more details.}
\end{rem}

\subsection{The special case $m=1$} 

\label{subsec_m=1}

From a viewpoint complementary to that of the previous subsection, it is also interesting to analyze the case when only one eigenvalue is non-zero. Let us associate the index $p=2$ to this eigenvalue, while $z,v=3,\ldots,n-1$. Using the general equations presented above, one obtains (cf. also~\cite{PraPra08}) 
\be
 \Phi_z=\Phi^0_z , \qquad \Phi_2=-\frac{1}{2}\sum_z\Phi^0_z, \qquad W_{2z}=\Phi^0_z , \qquad \sum_vW^0_{zv}=-3\Phi^0_z , 
\ee
while clearly here $W_{pq}=0$ (note that for $n=4$ one gets 
$\Phi^0_z=0$  
and $\Phi_{ij}=0=W_{ij}$, so this case does not occur). Therefore all components of $\Phi_{ij}$ and $W_{ij}$ are $r$-independent. Note that  $\Phi_{ij}=0\Leftrightarrow \Phi^0_z=0$. When this happens the only non-zero components of $W_{ij}$ can be the $W^0_{zv}$, but with the constraint $\sum_vW^0_{zv}=0$. Therefore it is easy to see that $W_{ij}$ can be non-zero only if $z,v$ can take at least four values, i.e., $z,v=3,4,5,6,\ldots$, so that $n\ge 7$. To summarize \begin{itemize}
 \item $n<7$, $m=1$, $\Phi_{ij}=0$ $\Rightarrow$ $W_{ij}=0$ .
\end{itemize}
This can be understood as a special instance of the results of subsection~\ref{subsec_Phi=0}. 
This result is of interest in six dimensions, since one always has $\Phi_{ij}=0\Rightarrow W_{ij}=0$ in four and five dimensions. Namely, for $n=6$ the case $\{a ,0,0,0\}$ is forbidden if $\Phi_{ij}=0\neq W_{ij}$. 

From the results of subsection~\ref{subsec_Phi3_4_deg}, when $m=1$ the only possible non-zero components of $\Phi^{\{3\}}_{ijkl}$ can be $\Phi_{(z)v(z)w}$ (with $v\neq w$), which satisfy
\be
 \Phi^0_{zvzw}=0 \qquad (v\neq w) , 
\ee
while the only possible non-zero components of $\Phi^{\{4\}}_{ijkl}$ are $\Phi_{zvwy}$. It is easy to see that, since $m=1$,  both sets of components vanish identically unless $n\ge7$, so that 
\begin{itemize}
 \item $n=6$, $m=1$  $\Rightarrow$ $\Phi^{\{3\}}_{ijkl}=0=\Phi^{\{4\}}_{ijkl}$. 
\end{itemize}

\begin{rem}
\label{rem_6d_m=1}
Since for $n=6$, $m=1$ one has $\Phi_{ij}=0$ $\Rightarrow$ $W_{ij}=0$ (see above), we conclude that in six dimensions spacetimes with $m=1$ require $\Phi_{ij}\neq 0$.
\end{rem}

\subsubsection{Examples}

\label{subsubsec_m=1}

For any $n\ge5$, simple examples of spacetimes with $m=1$ {(and thus $\Phi^A_{ij}=0$)} are given by:
\begin{enumerate}
 \item dS$_3\times$S$_{n-3}$ (or AdS$_3\times$H$_{n-3}$): $n\ge5$, $\Phi_{ij}\neq 0$, $\Lambda\neq 0$, \label{dSxS_m=1}
 \item as above but with a Brinkmann warp: $n\ge6$, $\Phi_{ij}\neq 0$, also $\Lambda=0$ is possible, \label{Brink_m=1} 
 \item Minkowski$_3\times$(Ricci-flat)$_{n-3}$: $n\ge7$, $\Phi_{ij}=0$, $\Lambda=0$. \label{MinkxRc_m=1}
\end{enumerate}

In the first case, if {we restrict to} $n\ge7$ the second factor space can be in fact {\em any} Einstein space (with Ricci scalar given by $2(n-3)K$). The metric of the first and third case can thus be given in a unified way as
\be
 \d s^2 =\left(1+\frac{K}2ur\right)^{-2}(2\d u\d r+r^2\d x^2)+\d\Sigma^2 ,
\ee
where $\d\Sigma^2$ is the metric of an $(n-3)$-dimensional Einstein space with Ricci scalar $R_\Sigma=2(n-3)K$ and $K$ is a constant related to the $n$-dimensional Ricci scalar by $R=2nK$. A geodesic twistfree mWAND $\bl$ (degenerate but expanding along $x$) is
\be
 {\ell}_a\d x^a=\d u .
\ee
Note that here $r$ is an affine parameter along $\bl$ only in the case $K=0$. {Such product spaces are of type D \cite{PraPraOrt07} and in fact any null vector field tangent to the three-dimensional Lorentzian factor is an mWAND \cite{GodRea09,PraPraOrt07}.}
Using these results metrics for the second case can also be constructed straightforwardly. {These will be also of type D \cite{OrtPraPra11}.}

\subsection{Case when all non-zero eigenvalues are distinct} 

\label{subsec_alldistinct}

By combining the results of subsections~\ref{subsec_Phi=0} and \ref{subsec_Phi3_4_deg} we have 
\begin{itemize}
 \item all non-zero $\rho_{ij}$ are distinct, $m=n-4$ or $m=n-5$ and $\Phi_{ij}=0$ $\Rightarrow$ all b.w. 0 components vanish,
 \item all non-zero $\rho_{ij}$ are distinct and $m=n-3$ $\Rightarrow$ all b.w. 0 components vanish.\footnote{By adding the results of \cite{Pravdaetal04} we can in fact conclude that in both cases the whole Weyl tensor must vanish.} 
\end{itemize}
Therefore type II Einstein spacetimes for which all non-zero $\rho_{ij}$ are distinct require that at least one of the following holds ({obviously $m>0$ or we would simply have Kundt}):
\begin{enumerate}
 \item $1\le m\le n-6$ (i.e., at least four eigenvalues of $\rho_{ij}$ vanish),
 \item $\Phi_{ij}\neq 0$ and $1\le m\le n-4$ (i.e., at least two eigenvalues of $\rho_{ij}$ vanish).
\end{enumerate}
The first case can occur only for $n\ge7$: for $n=7$ one can have only $\rho_{ij}=\mbox{diag}(a ,0,0,0,0)$, for $n=8$ either $\rho_{ij}=\mbox{diag}(a ,0,0,0,0,0)$ or  $\rho_{ij}=\mbox{diag}(a ,b ,0,0,0,0)$, etc.. The second case can occur for $n\ge5$: for $n=5$ one has simply $\rho_{ij}=\mbox{diag}(a ,0,0)$, for $n=6$ either $\rho_{ij}=\mbox{diag}(a ,0,0,0)$ or $\rho_{ij}=\mbox{diag}(a ,b ,0,0)$, for $n=7$ either $\rho_{ij}=\mbox{diag}(a ,0,0,0,0)$ or $\rho_{ij}=\mbox{diag}(a ,b ,0,0,0)$ or $\rho_{ij}=\mbox{diag}(a ,b ,c ,0,0,0)$, etc.. These results are summarized in table~\ref{tab_distinct_eigenval}. For such solutions {\em the optical constraint is clearly violated if $m>1$}. The {previously discussed} case $m=1$ has been included for completeness but it is of course ``trivial'' in this context since there is a single non-zero eigenvalue.

\begin{table}[t]
 \begin{center}
  \begin{tabular}{|c|c|l|c|}
   \hline Case & $n$ & Possible $m$ & Examples  \\\hline
    $\Phi_{ij}\not=0$ & 5 & 1 & \\
    					   		  & 6 & 1,2 &   \\
     									& 7 & 1,2,3 &  \\
     							 		& 8 & 1,2,3,4 &  \\
     							 		& \vdots &  &  \\
     								 	& $n$ & 1,2,3\ldots, $n-4$ & (A)dS$_{m+2}\times$S$_{n-2-m}$(H$_{n-2-m}$) \\\hline
     $\Phi_{ij} =0$   & 7 & 1 &  \\
        							& 8 & 1,2 &  \\
        						  & \vdots &  &  \\
        							& $n$ & 1,2,3,\ldots, $n-6$ &  Mink$_{m+2}\times$(Ricci-flat)$_{n-2-m}$  \\
\hline
  \end{tabular}
    \caption{Non-twisting Einstein spacetimes for which there are no repeated non-zero eigenvalues of $\rho_{ij}$: permitted values of the rank $m$ of $\rho_{ij}$ in various dimensions {(necessarily $1\le m \le n-4$ thanks to propositions~\ref{prop_non_deg2}  and \ref{prop_n-3_text})}. {As discussed in the text, different examples can also be generated using Brinkmann's warp, or by replacing the S$_{n-2-m}$(H$_{n-2-m}$) factor by a more general Einstein space. {Note that here solutions with $m>1$ violate the optical constraint.\label{tab_distinct_eigenval}}
    }}
  \end{center}
\end{table}

\subsubsection{Examples}

\label{subsubsec_examples_distinct}

The case $n=4$ is not possible. Examples with $m=1$ have been already presented {in subsection~\ref{subsec_m=1} and will not be discussed again here}. For $m\ge2$ the $n=5$ case is also forbidden {(cf. Remark~\ref{rem_5d} and \cite{Ortaggioetal12})}. 
For $n=6$ the only possibility is $m=2$ (with $\Phi_{ij}\neq0$) and one can construct explicit Einstein spacetimes by taking dS$_4\times$S$^2$ or AdS$_4\times$H$^2$. These can then be extended to any dimension $n\ge7$ by a simple Brinkmann warp (and if one starts from dS$_4\times$S$^2$ the cosmological constant of the resulting spacetime can be arbitrary, cf.~\cite{OrtPraPra11}), so to have explicit examples of the type $\rho_{ij}=\mbox{diag}(a ,b ,0,\ldots,0)$ (i.e., with $m=2$) for any $n>5$. 

However, according to the previous comments (see also table~\ref{tab_distinct_eigenval}), for $n=7$ one can have $m=2,3$, for $n=8$ it is possible $m=2,3,4$ and so on: in general, and for any $n>5$, $2\le m\le n-4$ explicit metrics can be constructed similarly as in six dimensions by taking dS$_{m+2}\times$S$_{n-2-m}$ or AdS$_{m+2}\times$H$_{n-2-m}$ (where the two factor spaces must have {Ricci scalars} given by $(m+2)(m+1)K$ and $(n-2-m)(m+1)K$, respectively), and appropriate hypersurface orthogonal null congruences living in the (A)dS$_{m+2}$ factor {(explicit examples are given below)}. These have necessarily $\Phi_{ij}\neq0$ and $\Lambda\neq0$. Using Brinkmann's warp one can also generate Ricci-flat solutions (again with $\Phi_{ij}\neq0$) for which, however, the stronger restriction $n>6$, $2\le m<n-4$ holds (at least if this method is used). Products of Ricci-flat spaces are possible if each factor is at least four-dimensional (therefore there are always at least four vanishing eigenvalues) and the Lorentzian factor is flat and fall into the $\Phi_{ij}=0$ class, with $\Lambda=0$ (these metrics belong to the type D \pp waves mentioned in \cite{OrtPraPra09}). 

Summary:
\begin{enumerate}
 \item dS$_{m+2}\times$S$_{n-2-m}$ (or AdS$_{m+2}\times$H$_{n-2-m}$)\footnote{Note that if $n-2-m\ge4$ the second factor space can be in fact any Einstein space with Ricci scalar given by $(n-2-m)(m+1)K$.}: $n>5$, $2\le m\le n-4$, $\Phi_{ij}\neq 0$, $\Lambda\neq 0$, \label{dSxS}
 \item as above + Brinkmann, also $\Lambda=0$: $n>6$, $2\le m\le n-5$, $\Phi_{ij}\neq 0$, \label{Brink}
 \item Minkowski$_{m+2}\times$(Ricci-flat)$_{n-2-m}$: $n>7$, $2\le m\le n-6$, $\Phi_{ij}=0$, $\Lambda=0$. \label{MinkxRc}
\end{enumerate}

Note that all the above spacetimes {also} belong to the Kundt class, although w.r.t. a null congruence different from the one which is of interest for our discussion (which is expanding and shearing). {In fact they all admit $\infty^m$ mWANDs:} similarly as in section~\ref{subsubsec_m=1}, they are of type D (with $\Phi^A_{ij}=0$) and any null vector field tangent to the Lorentzian factor is an mWAND -- so one can always also find an mWAND with an optical matrix having all non-zero eigenvalues identical.  {However, this mWAND ``degeneracy'' is not the generic situation} (see appendix~\ref{app_violating} for an example of a class of type D spacetimes which admit (only) two double WANDs, both violating the optical constraint).

Explicitly, for cases~\ref{dSxS} and \ref{MinkxRc} one can take, e.g., the metric
\be
 \d s^2 =\Omega^2(2\d u\d r+P_q^2\d x_q^2)+\d\Sigma^2 ,
 \label{ex_distinct}
\ee
where $q=2,\dots,m+1$, $\d\Sigma^2$ is the metric of an $(n-2-m)$-dimensional Einstein space with Ricci scalar $R_\Sigma=(n-2-m)(m+1)K$ and
\be
 P_q=r-b_q, \qquad \Omega^{-1}=1+\frac{K}{4}\left[2r\left(u-\frac{1}{2}P_qx_q^2\right)+P_q^2x_q^2\right ] ,
\ee
{in which $b_q$ and $K$ are constants.}
The first factor space in~(\ref{ex_distinct}) is a space of constant curvature with Ricci scalar given by $(m+2)(m+1)K$, so that the Ricci scalar of the full spacetime is 
\be
 R=n(m+1)K .
\ee

The geodesic, twistfree mWAND $\bl$ and the remaining frame vectors are given by
\be
 {\ell}_a\d x^a=\d u , \qquad n_a\d x^a=\Omega^2\d r, \qquad m_{(q)a}\d x^a=\Omega P_{(q)}\d x_{(q)} ,
\ee
while vectors $m_{(v)a}$ will depend on the specific form of $\d\Sigma^2$. The optical matrix $\rho_{ij}$ is given by
\be
 \rho_{pq}=\delta_{(p)q}\frac{(\Omega P_{(p)})_{,r}}{\Omega^{3}P_{(p)}} , \qquad \rho_{pv}=0=\rho_{vz} .
\ee
In the case $K=0$ (i.e., $\Lambda=0$) one thus has $\rho_{pq}=\delta_{(p)q}/(r-b_{(p)})$. Note, however, that $r$ is {\em not} an affine parameter when $K\neq 0$ (one has $\bl=\Omega^{-2}\pa_r$).

For case~\ref{Brink} one can use the method illustrated in detail in \cite{OrtPraPra11}.

\section{Counterexamples}

\label{sec_counter}

In \cite{Ortaggioetal12}, a counterexample to the converse of the five-dimensional ``shear-free'' part of the GS theorem was presented, thus demonstrating explicitly that the condition that the optical matrix $\rhob$ admits a canonical form compatible with a geodetic mWAND is not sufficient for the null geodetic being  an mWAND. Similarly, also our results above give conditions that are (necessary but) not sufficient. In order to demonstrate that, here we present a few ``counterexamples'', i.e., certain Einstein spacetimes that admit a non-twisting null (thus geodesic) vector field $\bl$ with $\rhob$ taking one of the permitted ``canonical forms'' (cf. theorem~\ref{prop_GSHD} and propositions~\ref{prop_non_deg2}, \ref{prop_IIabd} and \ref{prop_n-3_text}) and yet with $\bl$ not being an mWAND. Note that such counterexamples will necessarily be shearing: a twistfree shearfree null vector field is automatically an mWAND \cite{PodOrt06,OrtPraPra07}.

\subsection{Non-degenerate $\rhob=\{a,a,b,b,c_1, \ldots, c_{n-6} \}$}

In the non-degenerate case a ``counterexample'' to the canonical form $\{a,a,b,b,c_1, \ldots, c_{n-6} \}$ (with $a,b,c_\alpha \ne 0$ and $a\neq b$ -- see points (ii) and (iii) of theorem~\ref{prop_GSHD} and proposition~\ref{prop_non_deg2}) is given by the following six-dimensional Ricci-flat spacetime, which belongs to a class of metrics considered by Robinson (as described in \cite{Trautman02b}),
\be
 \d s^2=2\d u\d r+2\cosh^2r\d w\d\bar w+2\sin^2r\d\zeta\d\bar\zeta .
 \label{robinson}
\ee
Here the hypersuface orthogonal null vector field $\ell_a\d x^a=\d u$ is {\em not} an mWAND (not even a single one) and yet the corresponding 
$\rhob$ has the eigenvalue structure $\{a,a,b,b\}$ (with $a\neq0\neq b$, $a\neq b$). Associated to $\bl$ there is also an optical structure \cite{Trautman02b}.
Note, however, that spacetime~\eqref{robinson} is a type N \pp wave (albeit in {so-called Rosen coordinates, see, e.g., metric~(39) of \cite{Brinkmann25}}), with a covariantly constant mWAND given by $n_a\d x^a=\d r$. 

\subsection{Case $n-6<m<n-2$} 

Direct products of~\eqref{robinson} with flat extra dimensions clearly provide non-WANDs with eigenvalues $\{a,a,b,b,0,\ldots,0 \}$. In particular, for $n=7$ one obtains $\{a,a,b,b,0\}$, which is thus a counterexample to the $\rhob$-form of Proposition~\ref{prop_n-3_text}. For $n=8,9$ this is also a counterexample to the $\rhob$-form of Proposition~\ref{prop_IIabd}. Another counterexample to Proposition~\ref{prop_n-3_text} can be obtained for $n=6$ by taking the direct product of a vacuum black ring \cite{EmpRea02prl} with a flat dimension: in the coordinates of \cite{ElvEmp03}, the hypersurface orthogonal null vector field  (also considered in 5d in \cite{Taghavi-Chabert11} for different purposes)
\be
 \ell_a\d x^a = \frac{\sqrt{F(y)}}{G(y)}\d y+\d\psi ,
\ee
corresponds to $\rhob$ of the form $\{a,a,b,0\}$ and is not a WAND. 
Alternatively, the same form $\{a,a,b,0\}$ can be obtained by taking the direct product of a 5d vacuum static KK bubble \cite{GodRea09} with a flat dimension (with a non-WAND $\ell_a\d x^a=\d t+V(r)^{-1/2}\d r$ -- this 6d spacetime is of type D but the mWANDs are different from $\bl$ \cite{OrtPraPra11}).

\subsection{Case when all non-zero eigenvalues are distinct} 

Examples of spacetimes with a non-twisting, non-WAND $\bl$ having all non-zero eigenvalues distinct can be constructed as a direct product of a 4d \pp wave with flat space, i.e.,
\be
 \d s^2=2\d u\d r+\cosh^2r\d x^2+\sin^2r\d y^2+\d z_i\d z^i .
 \label{4d_pp}
\ee
Here,  $\ell_a\d x^a=\d u$ has a $\rhob$ with eigenvalue structure $\{a,b,0,\ldots,0\}$ (with $a\neq0\neq b$, $a\neq b$). This is the form of $\rhob$ discussed in section~\ref{subsec_alldistinct} (see also table~\ref{tab_6D} for the $n=6$ case). Similarly as the spacetime~\eqref{robinson}, also \eqref{4d_pp} is a type N \pp wave with mWAND $n_a\d x^a=\d r$. 

\subsection{Case $m=1$}

Examples of non-twisting null congruences with $m=1$ (the case discussed in section~\ref{subsec_m=1}) that are not multiple WANDs can be obtained from the (Ricci-flat) Newman-Tamburino lift considered in \cite{Ortaggioetal12}, i.e.,
\be
{\rm d}s^2 = r^2 {\rm d} x^2 + x^2 {\rm d} y^2 - \frac{4 r}{x } {\rm d} u {\rm d} x - 2 {\rm d} u {\rm d} r + x^{-2} \left[c + \ln (r^2 x^4)\right] {\rm d} u^2+\d z_i\d z^i , \label{NTmetric}
\ee
where $c$ is a constant. In this case $\ell_a\d x^a = {\rm d} u$ is a {\em single} WAND  with optical matrix of the form $\{a,0,\ldots,0\}$ (with $a\neq 0$).

\section{Existence of totally geodesic null two-surfaces}

\label{sec_totgeod}

Theorem~\ref{prop_GSHD} comprises the main results proven in the previous sections. In particular, it implies that there always exists at least a pair of repeated (possibly vanishing) eigenvalues and that (recall that $n\ge 6$) there are in fact at least two such pairs when, e.g.,  $\det\rhob\neq0$ (see also proposition~\ref{prop_non_deg2}) or $\Phi_{ij}=0$ (see also propositions~\ref{prop_IIabd} and \ref{prop_n-3_text}). The following proposition (see also the definitions in appendix~\ref{app_shearfree}, in particular eq.~\eqref{D}) elucidates the geometrical meaning of this fact (and provides a connection with the standard 4d version of the Goldberg-Sachs theorem, cf. appendix~\ref{app_GS_4D}).
\begin{prop} 
 In an algebraically special Einstein spacetime of dimension $n\ge 6$ with a non-twisting {mWAND} $\bl$, using a parallelly transported eigenframe of $\rhob$
 \begin{enumerate} 
 	\item if the eigenvalue structure of $\rhob$ is $\{a,a,c_1,\ldots, c_{n-5},0 \}$ with $c_\alpha\neq a\neq 0$ then the totally null distribution ${\cal D}_{23}$ is integrable with totally geodesic integral surfaces, 
 	\item if the eigenvalue structure of $\rhob$ is $\{0,0,c_1,\ldots, c_{n-4}\}$ with $c_\alpha\neq 0$ then the totally null distribution ${\cal D}_{23}$ is integrable with totally geodesic integral surfaces,
 	\item if the eigenvalue structure of $\rhob$ is $\{a,a,b,b,c_1, \ldots, c_{n-6} \}$ with $b\neq a$ and $a\neq c_\alpha\neq b$ then the totally null distributions ${\cal D}_{23}$ and ${\cal D}_{45}$ are integrable with totally geodesic integral surfaces (either $a$ or $b$ can be zero). 
 \end{enumerate}	
\label{prop_integrab}
\end{prop}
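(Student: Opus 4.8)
The plan is to realise each totally null distribution as a complex null two-plane and to check directly that it is auto-parallel. For ${\cal D}_{23}$ (cf.~eq.~\eqref{D}) I set $\boldsymbol{\mu}\equiv\mmb{2}+i\mmb{3}$, so that ${\cal D}_{23}=\mathrm{span}_{\mathbb C}\{\bl,\boldsymbol{\mu}\}$, the indices $2,3$ labelling the repeated eigenvalue $a$. Since $\bl\cdot\bl=\bl\cdot\boldsymbol{\mu}=\boldsymbol{\mu}\cdot\boldsymbol{\mu}=0$ the plane is totally null, and both integrability and total geodesy reduce to the single requirement that $\nabla_X Y\in{\cal D}_{23}$ for $X,Y\in\{\bl,\boldsymbol{\mu}\}$; equivalently, that the $\bn$-, $\bar{\boldsymbol{\mu}}$- and $\mmb{J}$-projections (for $J\ge4$) of every such $\nabla_X Y$ vanish.

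Most of these projections are immediate. In the given parallelly transported eigenframe one has $\nabla_{\bl}\bl=0$ and $\nabla_{\bl}\mmb{i}=0$, hence $\nabla_{\bl}\boldsymbol{\mu}=0$. Expanding $\nabla_{\boldsymbol{\mu}}\bl$ via \eqref{expmatrix} with $\rho_{ij}$ diagonal, the $\bn$-projection vanishes because $\bl$ is null and geodesic, the $\mmb{J\ge4}$-projections vanish by diagonality, and the double eigenvalue $\rho_2=\rho_3=a$ recombines the surviving $\mmb{2},\mmb{3}$ parts into $a\,\boldsymbol{\mu}$; thus $\nabla_{\boldsymbol{\mu}}\bl\in{\cal D}_{23}$ and in particular $[\bl,\boldsymbol{\mu}]=-\nabla_{\boldsymbol{\mu}}\bl\in{\cal D}_{23}$, which already gives integrability. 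For $\nabla_{\boldsymbol{\mu}}\boldsymbol{\mu}$ the $\bn$-projection equals $-(\rho_2-\rho_3)=0$, again by the double eigenvalue, while the $\bar{\boldsymbol{\mu}}$-projection vanishes \emph{identically} once one uses the antisymmetry $\M{i}{j}{k}=-\M{j}{i}{k}$ (a consequence of $\mmb{i}\cdot\mmb{j}=\delta_{ij}$).

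The one non-trivial projection is the $\mmb{J}$-component of $\nabla_{\boldsymbol{\mu}}\boldsymbol{\mu}$ for $J\ge4$, namely
\be
 (\M{2}{J}{2}-\M{3}{J}{3})+i(\M{2}{J}{3}+\M{3}{J}{2}).
\ee
To show this vanishes I would invoke the Ricci identity $[\nabla_a,\nabla_b]\ell_c=R_{abcd}\ell^d$ contracted with $\mmb{j}^a\mmb{k}^b\mmb{i}^c$. Because $\bl$ is a multiple WAND the boost-weight $+1$ Weyl components are zero, and for this index structure the Einstein (constant-curvature) part also drops out, so the right-hand side vanishes. Expanding the left-hand side in the coefficients \eqref{L1i_M} and antisymmetrising in $j,k$ --- where the symmetry of $\rho_{ij}$ (non-twisting) is used to cancel the $\tau_i$-terms --- produces a relation in which each $\M{i}{j}{k}$ is weighted by an eigenvalue difference $(\rho_i-\rho_j)$, accompanied by $\mmb{j}$-directional derivatives of the eigenvalues and by $L_{1j}$-terms. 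Taking the free indices $(i,k,j)=(2,2,J)$ and $(3,3,J)$ and subtracting, the derivative- and $L_{1J}$-terms cancel pairwise thanks to $\rho_2=\rho_3=a$; the choices $(2,3,J)$ and $(3,2,J)$ are handled likewise. One is left with
\be
 (\M{3}{J}{3}-\M{2}{J}{2})(a-\rho_J)=0,\qquad \M{2}{J}{3}(a-\rho_J)=0=\M{3}{J}{2}(a-\rho_J)\qquad(J\ge4).
\ee

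The main obstacle is precisely the vanishing of the right-hand side of this Ricci identity: it is here that both the multiple-WAND hypothesis (annihilating the boost-weight $+1$ Weyl part) and the Einstein condition are indispensable, and one must verify that the trace/$\Lambda$ contribution really is absent for the chosen contraction. Granting this, the eigenvalue separation built into each listed structure --- $a\ne\rho_J$ for all $J\ge4$ in cases 1 and 2, and $a\ne b$ with $a\ne c_\alpha\ne b$ in case 3 --- forces $\M{2}{J}{2}=\M{3}{J}{3}$ and $\M{2}{J}{3}=\M{3}{J}{2}=0$, so the displayed projection vanishes and ${\cal D}_{23}$ is totally geodesic. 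The treatment of ${\cal D}_{45}$ is word-for-word the same with $a\to b$ and $\{2,3\}\to\{4,5\}$, and since only eigenvalue \emph{differences} enter, the degenerate subcases (the vanishing double eigenvalue in case 2, or $a=0$ resp.~$b=0$ in case 3) require no modification.
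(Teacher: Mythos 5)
Your proposal is correct and follows essentially the same route as the paper: the paper's proof also rests on the Ricci identity (11k) of \cite{OrtPraPra07} (which is exactly your contracted commutator $[\nabla_a,\nabla_b]\ell_c=R_{abcd}\ell^d$ after the boost-weight $+1$ Weyl components and the Einstein trace terms are dropped), applied with the index choices $(2,2,\hat k)$, $(3,3,\hat k)$, $(2,3,\hat k)$, $(3,2,\hat k)$ to obtain $\M{2}{\hat k}{2}=\M{3}{\hat k}{3}$ and $\M{2}{\hat k}{3}=0=\M{3}{\hat k}{2}$ from the eigenvalue separation. The only difference is presentational: you re-derive the identity and the reduction of total geodesy to conditions on the $\M{i}{j}{k}$ explicitly, whereas the paper cites eq.~\eqref{11k} and the criteria \eqref{2totgeod_real}.
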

\begin{proof}
The proof is similar in all cases and relies on the use of the Ricci identity (11k,\cite{OrtPraPra07}) (equivalent to the NP equation (A4,\cite{Durkeeetal10})), which for algebraically special Einstein spacetimes reduces to 
\be
 \delta_{[j|} \rho_{i|k]} =  L_{1[j|} \rho_{i|k]}+ \tau_i \rho_{[jk]}+\rho_{il}\M{l}{[j}{k]}
+ \rho_{l[j|}\M{l}{i|}{k]} , \label{11k}
\ee
where $\delta_i=m_{(i)}^a\nabla_a$ (recall also the definitions in \eqref{L1i_M}). Since here $\bl$ is twistfree we have $\rho_{[jk]}=0$ in the above equation. Assume the form of $\rhob$ specified in 1., i.e., $\rho_{22}=\rho_{33}=a\neq 0$ (and all remaining eigenvalues different from $a$) and consider \eqref{11k} for $i,j,k=23\hat k/32\hat k$ (with $\hat k\neq 2,3$). This gives $\M{2}{\hat k}{3}=0=\M{3}{\hat k}{2}$. Then, \eqref{11k} with $i,j,k=22\hat k/33\hat k$ gives $\M{2}{\hat k}{2}=\M{3}{\hat k}{3}$,  which completes the proof of point 1. (see eqs.~\eqref{2totgeod_real}). The proof of point 2. is identical. The proof of point 3. is analogous, after extending the reasoning to the index pair (45); in this case, however, either $a$ or $b$ can be zero, provided all $c_\alpha\neq 0$. 
\end{proof}
\begin{rem}
In the case there are further pairs of repeated eigenvalues, the corresponding two-spaces will also define integrable totally geodesic distributions, provided these pairs are not repeated. In particular in even dimensions, if there are $(n-2)/2$ (non-repeated) pairs of equal eigenvalues there will be $(n-2)/2$ such totally geodesic two-surfaces.
\end{rem}
\begin{rem}
Point 1. of the above proposition includes, in particular, all non-twisting but expanding type III/N Einstein spacetimes (for which $c_{\alpha}=0$ \cite{Pravdaetal04}) and type II non-twisting non-Kundt spacetimes with $\Phi^A_{ij} \ne 0$ (point \eqref{PhiA} of theorem~\ref{prop_GSHD}). 
Point 3. is relevant to the case $\det\rhob\neq0$ with $\Phi_{ij}=0$ (proposition~\ref{prop_non_deg2} and point \eqref{Phi0} of theorem~\ref{prop_GSHD}), which can only be of type II(abd).
\end{rem}
\begin{rem}
The proposition does not explicitly include the $n=5$ case. However, from Proposition~4 of \cite{Ortaggioetal12} one can easily obtain the corresponding results (also in the presence of twist). In particular, in the non-twisting case ${\cal D}_{23}$ is totally geodesic when the eigenvalue structure of $\rhob$ is either $\{a,a,0\}$ or  $\{0,0,a\}$ (with $a\neq 0$ in both cases). The same is true in the case $\{a,a,a\}\neq0$, although this does not follow from (11k,\cite{OrtPraPra07}) (instead, one can easily adapt the argument of footnote~10 of \cite{Ortaggioetal12}). 
\end{rem}
\begin{rem}
Recall that in 4d the Goldberg-Sachs theorem implies that $\bl$ is geodesic and the eigenvalue structure of $\rho_{(ij)}$ is $\{a,a\}$, which is {\em equivalent} to ${\cal D}_{23}$ being integrable (and thus automatically totally geodesic, in 4d, see appendix~\ref{app_GS_4D}).
\end{rem}

\section{Type II spacetimes with a non-twisting multiple WAND in six dimensions}

\label{sec_6D}

As already mentioned, five-dimensional algebraically special Einstein spacetimes have been studied in detail in \cite{Ortaggioetal12} and, for $n=5$, the results obtained above are also contained in that reference. The next lower dimension to consider is thus $n=6$ and in this section our general results are specialized to this case. This is of interest also because of a qualitative difference between $n<6$ and $n\ge 6$ dimensions: for $n<6$ type II spacetime necessarily have $\Phi_{ij}\neq 0$ (since type II coincides with type II(c) {for $n=4,5$}), while this is not so in higher dimensions.  

\subsection{Permitted forms of $\rhob$}

\label{subsec_permitted_6D}

One has to consider the various cases $m=4,3,2,1$ ($m=0$ is Kundt).

First, recall that for $m\neq 2$ point \eqref{PhiA} of theorem~\ref{prop_GSHD} implies that $\Phi^A_{ij}=0$. Let us thus first discuss this case.

Now, for $m=4$ proposition~\ref{prop_non_deg2} tells us that if $\Phi_{ij}\neq 0$ we are in the \RT class, i.e. $\rhob=\mbox{diag}(a,a,a,a)\neq 0$ (in which case $\Phi_{ij}\propto \delta_{ij}$), otherwise we have $\rhob=\mbox{diag}(a,a,b,b)$ (with $a,b\neq 0$).

For $m=3$ we learn from Proposition~\ref{prop_n-3_text} that $\rhob=\mbox{diag}(a,a,b,0)$, while from section~\ref{subsec_n-3} we know that {$\Phi^{\{3\}}_{ijkl}=0=\Phi^{\{4\}}_{ijkl}$ and   $\Phi_{ij}\neq 0$ (Remark~\ref{rem_6d}).} Substituting the form of $\rhob$ into (\ref{eqn:algI})--\eqref{eqn:rhoC} {{and}} (\ref{eqn:algII}) further reveals that $\Phi_{ij}=\diag(\alpha,-\alpha,0,0)\neq 0$.

For $m=1$ one obviously has $\rhob=\mbox{diag}(a,0,0,0)\neq 0$. From section~\ref{subsec_m=1} we know that { $\Phi^{\{3\}}_{ijkl}=0=\Phi^{\{4\}}_{ijkl}$ and $\Phi_{ij}\neq 0$ (Remark~\ref{rem_6d_m=1}).}

Finally, for $m=2$ we can have two possibilities. If $\Phi^A_{ij}\neq0$ point \eqref{RT} of theorem~\ref{prop_GSHD} 
shows that $\rhob=\mbox{diag}(a,a,0,0)\neq 0$. On the other hand, if $\Phi^A_{ij}=0=\Phi^S_{ij}$ we still get $\rhob=\mbox{diag}(a,a,0,0)\neq 0$ from Proposition~\ref{prop_IIabd}, whereas if $\Phi^A_{ij}=0\neq\Phi^S_{ij}$ the more general form $\rhob=\mbox{diag}(a,b,0,0)$ is permitted (with $a\neq 0\neq b$).

We have checked in the various permitted cases  that the constraints (\ref{eqn:algI})--\eqref{eqn:rhoC} and \eqref{alg_no_Phi4}--\eqref{eqn:algII_N}
 allow for nontrivial type II Weyl tensor. {These results are summarized in table~\ref{tab_6D}, together with a few examples. Examples of spacetimes with distinct non-zero eigenvalues follow from the general discussion of section~\ref{subsubsec_examples_distinct}, which need not be repeated here.}

\begin{table}[t]
  \begin{center}
  \begin{tabular}{|c|l|c|c|c|l|c|}
    \hline Case & $m$ & Possible form of $\rhob$ &  O.C. & O.S. & I.T.G. & Examples  \\\hline
    $\Phi_{ij} \not=0$ & 4 & $\diag(a,a,a,a)$    & $\surd$ & $\surd *$ & ${\cal D}_{23}*$, ${\cal D}_{45}*$ & RT \\
     $  $ & 3 & $\diag(a,a,b,0)$     & if $b=a$ & X & ${\cal D}_{23}$ (if $b\neq a$) & ?($\dag$) \\
     $  $ & 2 \ ($\Phi^A_{ij}=0$) & $\diag(a,b,0,0)$    & if $b=a$ &  & ${\cal D}_{45}$ (${\cal D}_{23}$ if $b=a$) & {dS$_4\times$S$^2$} \\
     $  $ & 2 \ ($\Phi^A_{ij}\neq 0$) & $\diag(a,a,0, 0)$    & $\surd$ & $\surd$ & ${\cal D}_{23}$, ${\cal D}_{45}$ & ? \\
     $  $ & 1 & $\diag(a,0,0,0)$    & $\surd$ & X & & {dS$_3\times$S$^3$} \\
     $  $ & 0 & $\diag(0,0,0,0)$   & $\surd$ & &  & Kundt \\\hline
     $\Phi_{ij} =0  $ & 4 & $\diag(a,a,b,b)$    & if $b=a$ & $\surd$ &  ${\cal D}_{23}$, ${\cal D}_{45}$ (if $b\neq a$)& ?($\dag$) \\
     $    $ & 3 & X    & & & & X \\
     $ $  & 2 & $\diag(a,a,0,0)$  & $\surd$ &  $\surd$ & ${\cal D}_{23}$, ${\cal D}_{45}$ & ? \\
     $ $  & 1 & X     & & & & X \\
     $ $  & 0 & $\diag(0,0,0,0)$  & $\surd$ &  & & Kundt \\
\hline
  \end{tabular}
    \caption{Permitted forms of the optical matrix associated with a non-twisting mWAND in a six-dimensional type II spacetime in the two cases $\Phi_{ij} \not=0$  and $\Phi_{ij}=0$ (subtype II(abd)). Recall that $\Phi^A_{ij}\neq 0$ is possible only for $m=2$. A question mark indicates subcases that are in principle permitted but for which no examples are known (so we do not claim that these classes are non-empty), while  X corresponds to forbidden subcases. 
Although we are not aware of examples with $m=3$ and $\Phi_{ij} \not=0(=\Phi^A_{ij})$ having the most general structure, special solutions with $a=b$ (e.g., a static black string) can be obtained by warping once a type II 5d Einstein metric. {Similarly, generic $m=4$ solutions with $\Phi_{ij}=0$ are not known, but in the special case $a=b$ they} exist within the \RT family \cite{PodOrt06} (the ``$\mu=0$'' subcase). {These special examples are denoted by a $\dag$ in the table.} In the O.C. column we have indicated if/under what conditions $\rhob$ satisfies the optical constraint. In the O.S. column whether ${\cal D}=\mbox{Span}\{\mbb{2}+i\mbb{3},\mbb{4}+i\mbb{5},\bl\}$ defines an optical structure and in the I.T.G. column we listed integrable complex two-dimensional distribution that admit totally geodesic integral surfaces (cf. section~\ref{subsec_integrability} and appendix~\ref{app_shearfree}). The symbol $*$ means that the corresponding statement is true in a special subcase specified in the text (see section~\ref{subsec_integrability}). 
\label{tab_6D}}
  \end{center}
\end{table}

\subsection{Additional examples}

In addition to the examples mentioned in table~\ref{tab_6D} and in section~\ref{subsubsec_examples_distinct}, here we give  two more solutions with $m=1$ in six dimensions (and, necessarily, $\Phi_{ij}\neq0$). 

\subsubsection{Example with $\Lambda\neq0$}

We can use, e.g., AdS$_4\times$H$^2$ in the form 
\be
 \d s^2=\frac{3\beta^2}{z^2}(-\d t^2+\d\rho^2+\rho^2\d\phi^2+\d z^2)+\frac{\beta^2}{x^2}(\d x^2+\d y^2) .
\ee 
For the null congruence
\be
 \ell_a\d x^a=\d t+\d\rho ,
\ee 
with the spacelike frame vectors $\bbm_{(2)}=z\rho^{-1}/(\sqrt{3}\beta)\pa_\phi$, $\bbm_{(3)}=z/(\sqrt{3}\beta)\pa_z$, $\bbm_{(4)}=x\beta^{-1}\pa_x$, $\bbm_{(5)}=x\beta^{-1}\pa_y$,  one finds $\rho_{ij}=\mbox{diag}(a,0,0,0)$, where $a=z^2/(3\beta^2\rho)$.

\subsubsection{Example with $\Lambda$ arbitrary}

One can alternatively construct an example with {$\rho_{ij}=\mbox{diag}(a,0,0,0)$} by starting from a null geodesic congruence in dS$_3\times$S$^2$ and adding an extra dimension by Brinkmann's warp. For example, 
\be
 \d s^2=f(z)\left[-V(r)\d t^2+\frac{1}{V(r)}\d r^2+r^2\d\chi^2+\beta^2(\d\theta^2+\sin^2\theta\d\phi^2)\right]+\frac{\d z^2}{f(z)}, 
\ee 
with the null congruence
\be
 \ell_a\d x^a=\d t+V^{-1}\d r ,
\ee 
where  
\be 
V(r)=1-\frac{r^2}{2\beta^2}, \qquad f(z)=-\lambda z^2+2dz+b , \qquad d^2=\frac{1}{4\beta^2}-\lambda b.
\ee
Using the spacelike frame vectors $\bbm_{(2)}=f^{-1/2}r^{-1}\pa_\chi$, $\bbm_{(3)}=f^{-1/2}\beta^{-1}\pa_\theta$, $\bbm_{(4)}=f^{-1/2}\beta^{-1}\sin^{-1}\theta\pa_\phi$, $\bbm_{(5)}=f^{1/2}\pa_z$,  one finds $\rho_{ij}=\mbox{diag}(a,0,0,0)$, where $a=\frac{1}{rf}$. 

It follows from \cite{PraPraOrt07,OrtPraPra11} that $\Phi_{ij}\neq0$, {as it should be}. In general $\Lambda$ can have any sign but, in particular, we can set $\Lambda=0$ by choosing the parameter $\lambda=0$ in $f(z)$.

\subsection{Optical constraint, optical structures and totally geodesic complex two-dimensional null surfaces}

\label{subsec_integrability}

First, in the non-twisting case the optical constraint \eqref{canformL} is satisfied iff the optical matrix $\rhob$ possesses only one (possibly repeated) non-zero eigenvalue plus, possibly, some vanishing eigenvalues. This explains column O.C. (``optical constraint'') of table~\ref{tab_6D}.

Next, when $\bl$ is non-twisting one can always take an eigenframe of $\rhob$ to be parallelly transported \cite{OrtPraPra10,PraPra08} and thus in particular set $\M{i}{j}{0}=0$. It is then obvious that to each pair of repeated eigenvalues of $\rhob$ there corresponds an integrable two-dimensional totally null distribution, defined by two unit vectors of the corresponding eigenspace and by $\bl$ (cf. eq.~\eqref{D23_6d}). 

Further conditions on the Ricci rotation coefficients can be obtained by considering the Ricci identity (11k,\cite{OrtPraPra07}), i.e., eq.~\eqref{11k}, similarly as in section~\ref{sec_totgeod}.
Consequences of this equation enable one to prove the statements in columns  O.S. (``optical structure'') and I.T.G. (``integrable totally geodesic'' complex two-dimensional subspace) of table~\ref{tab_6D}. Since this is rather straightforward, let us just exemplify it to show that when $\rho_{ij}=\diag(a,a,b,b)$ (with $b\neq a$) the maximally totally  null distribution  ${\cal D}=\mbox{Span}\{\mbb{2}+i\mbb{3},\mbb{4}+i\mbb{5},\bl\}$ defines an optical structure (cf. eqs.~\eqref{6D_OS}). Namely, from \eqref{11k} with $i,j,k=224$ and $i,j,k=334$ one gets $\M{4}{2}{2}=\M{4}{3}{3}$, while $i,j,k=225$ and $i,j,k=335$ give $\M{5}{2}{2}=\M{5}{3}{3}$. Similarly, by taking $i,j,k=442/552$ and $i,j,k=443/553$ one obtains, respectively, $\M{2}{4}{4}=\M{2}{5}{5}$ and $\M{3}{4}{4}=\M{3}{5}{5}$. Next, with $i,j,k=324,234,325,235$ one finds $\M{3}{4}{2}=\M{2}{4}{3}=\M{3}{5}{2}=\M{2}{5}{3}=0$, while $i,j,k=524,425,534,435$ give $\M{5}{2}{4}=\M{4}{2}{5}=\M{5}{3}{4}=\M{4}{3}{5}=0$. This implies that eqs.~\eqref{6D_OS} are satisfied, as we wanted to prove. 

Let us finally remark that in the Robinson-Trautman case, corresponding to the first row of table~\ref{tab_6D}, eq.~\eqref{11k} reduces to $L_{1i}=0$  
(cf. also the results of \cite{PodOrt06,PraPra08}) and thus does not constraint the $\M{i}{j}{k}$. However, in the special subcase (cf.~\cite{PodOrt06}) in which the transverse space is of constant curvature one can show by inspection that there exists a natural parallelly transported frame such that ${\cal D}$ defines an optical structure and ${\cal D}_{23}$, ${\cal D}_{45}$ correspond to totally geodesic null two-surfaces.\footnote{This is a straightforward extension of the discussion in the 5d case, see footnote~10 of \cite{Ortaggioetal12}.} {{In fact the real $\mbb{i}$ can be paired arbitrarily in this case, giving rise also to other optical structures and totally geodesic null two-surfaces, and since }} there is a  continuous freedom in  spins there exist {{actually}} infinitely many optical structures. 
This partly answers a question raised in remark~5.3 of \cite{Taghavi-Chabert11b}, by showing that also in even dimensions a spacetime may admit more than $2^{[n/2]}$ optical structures (noting also that in this example replacing $\bl\leftrightarrow\bn$ gives other optical structures).

\subsection{Type III/N spacetimes}

By the results of \cite{Pravdaetal04}, in this case there are only two admitted structures of $\rhob$: either $\rhob=0$ (Kundt spacetimes) or $\rhob=\diag(a,a,0,0)\neq0$. Since eq.~\eqref{11k} does not contain any Weyl components of zero boost weight, the above discussion still applies and implies that in the case $\rhob=\diag(a,a,0,0)\neq0$ both ${\cal D}_{23}$ and ${\cal D}_{45}$ correspond to totally geodesic null two-surfaces and that ${\cal D}$ defines an optical structure -- explicit examples can be constructed as explained in \cite{OrtPraPra10}.

\section*{Acknowledgments}

We are grateful to Lode Wylleman for useful discussions. The authors acknowledge support from research plan {RVO: 67985840} and research grant no P203/10/0749.

\renewcommand{\thesection}{\Alph{section}}
\setcounter{section}{0}

\renewcommand{\theequation}{{\thesection}\arabic{equation}}

\section{Algebraic constraints for type II spacetimes}
\setcounter{equation}{0}

\label{app_alg_eq}

\subsection{{General mWAND}}

For completeness and future reference, let us give here some additional algebraic constraints following from the Bianchi and Ricci identities for type II Einstein spacetimes in arbitrary dimensions. 

Note that rhs of eq. \eqref{A4} is not symmetric under $ij\leftrightarrow kl$ and thus \eqref{A4}$_{ijkl}-$\eqref{A4}$_{klij}$ leads to an  algebraic equation
\be
2(\Phia_{ij} A_{kl}-\Phia_{kl} A_{ij}) +\Phi_{k[j}\rho_{i]l}-\Phi_{l[j}\rho_{i]k}-\Phi_{j[k}\rho_{l]i}+\Phi_{i[k}\rho_{l]j}
+\Phi_{ij[k|s}\rho_{s|l]}-\Phi_{kl[i|s}\rho_{s|j]}=0 , \label{alg_no_Phi4}
\ee
and its trace gives the antisymmetric part of \eqref{B8}.
By multiplying by $\rho_{lj}$ and using \eqref{B8}--\eqref{B8asym} and symmetries of the Weyl tensor, after  lengthy calculations one arrives at an equation with no $\Phi_{ijkl}$ terms
\be
\rho \Phi_{j[i|}\rho_{j|k]}+\Phi_{jl}\rho_{jl}A_{ki}+\rho_{jl}\rho_{jl}\Phia_{ik}
+\Phi_{jl}\left[ 3\rho_{[j|i}\rho_{|l]k} +\rho_{[k|l}\rho_{j|i]} \right]
+2\rho_{jl}\left[ \Phi_{j[i|}A_{l|k]}+2\Phia_{l[i|}\rho_{j|k]}\right]=0.
 \label{constr_twist}
\ee

\subsection{{Non-twisting mWAND with $\Phia_{ij}=0$}}

As already mentioned, by differentiating algebraic constraints and using the Bianchi and Ricci identities one can arrive at further algebraic constraints.
For example, in the {\em non-twisting} case {with $\Phia_{ij}=0$},\footnote{{{Recall that in the non-twisting case with $\Phia_{ij}\neq0$ the form of $\rhob$ is fully determined by point \eqref{PhiA} of theorem~\ref{prop_GSHD}, cf.~\cite{Ortaggioetal12}.}}}
by differentiating~\eqref{eqn:algI} 
we arrive at \eqref{eqn:rhoC} {(equivalent to setting $A_{ij}=0=\Phia_{ij}$ in \eqref{thornB8}), which can be rewritten as}
\be
 - [\rhob^2] \Phi_{ik} + \Phi (\rhob^2)_{ik} +  2 \Phi_{ij} (\rhob^2)_{jk} + {\Phi}_{ijkl} {(\rhob^2)}_{jl}  =0 \qquad (A_{ij}=0{=\Phia_{ij}}),
 \label{eqn:algII_0}
\ee
where $[\rhob^2]$ denotes the trace of {$\rhob^2$}, etc. The trace of \eqref{eqn:algII_0} vanishes and multiplying \eqref{eqn:algII_0} by $\rho_{ik}$ gives
\be
	[\rhob^2][\Phib \cdot \rhob]-[\rhob][\Phib \cdot\rhob^2]=0 \qquad {(A_{ij}=0{=\Phia_{ij}})} .
	\label{trace}
\ee
Differentiation {of \eqref{eqn:algII_0}} leads to 
\bea
 & & -3 \left[ - [\rhob^3] \Phi_{ik} + \Phi (\rhob^3)_{ik} +  2 \Phi_{ij} (\rhob^3)_{jk} + {\Phi}_{ijkl} {(\rhob^3)}_{jl}   \right] \nonumber\\
		& & {}+ \left( \Phi [\rhob^2] + [\Phib \cdot\rhob^2] 
		\right) \rho_{ik} 
- \left(\Phi \rho + [\Phib \cdot \rhob] 
\right) (\rhob^2)_{ik}=0 \qquad (A_{ij}=0{=\Phia_{ij}})
\label{eqn:algII} ,
\eea
with trace {of \eqref{eqn:algII}} vanishing  due to \eqref{trace}. 

For $\Phi_{ij}=0$, the $(N-1)$th derivative of  \eqref{B8}   reads
\be
	{\Phi}_{ijkl} {(\rhob^N)}_{jl} =0, \quad N =1,2,\dots \qquad (A_{ij}=0{=\Phi_{ij}}).
	\label{eqn:algII_N}
\ee

\section{{Non-twisting type III Einstein spacetimes in arbitrary dimension}}

\setcounter{equation}{0}

\label{app_type_N_III}

For non-twisting Einstein spacetimes of type N, $\rho_{ij}=\diag({a},{a},0,\dots,0)$ \cite{Pravdaetal04,Durkeeetal10}, where ${a}$ {vanishes} in the case of Kundt spacetimes. Let us present  here a new derivation of the form of the optical matrix for a non-twisting type III Einstein spacetime in arbitrary
dimension. This is considerably simpler than the original derivation of \cite{Pravdaetal04}. 

For non-twisting type III, eq. (59) of \cite{Pravdaetal04} reduces to {(in the notation of \cite{Durkeeetal10})}
\be
2 \Psi'_{lij} \rho_{lk} = \rho \Psi'_{kij}.   \label{typeIIInon-twist}
\ee
In the Kundt case ($\rhob=0$) this becomes an identity, while for non-twisting non-Kundt Einstein spacetimes (to which we restrict from now on) necessarily $\rho\neq0$ (cf. Proposition~1 of \cite{OrtPraPra07}) and \eqref{typeIIInon-twist} gives non-trivial information.
Following \cite{OrtPraPra10} we choose a parallelly propagated frame with diagonal form of  $\rho_{ij}=\diag(\rho_2,\rho_3,\dots)$.
Without loss of generality we can assume that $\Psi'_{2ij} \not=0$ for some values of $i,\ j$. Eq.~\eqref{typeIIInon-twist} then implies
\be
\rho_2 = \frac{\rho}{2}{\neq0} \label{typeIIIrho}.
\ee
Using the $r$-dependence of the non-vanishing eigenvalues of $\rho_{ij}$ (see eq.~\eqref{rho_i}), this gives
\be
\frac{2}{r-b_2} = \sum_p \frac{1}{r-b_p},
\ee
where $p$ corresponds to all non-vanishing eigenvalues of $\rho_{ij}$.
Apart from the Kundt case with $\rho_{ij}=0$, this equation can be satisfied only for $\rho_{ij}$ of rank 2, with ${a\equiv}\rho_2=\rho_3$.
Thus {\em the optical matrix for the non-twisting type III Einstein spacetime is either vanishing (Kundt) or 
$\rho_{ij}=\diag({a},{a},0,\dots,0)$}, exactly as in the type N case.

\section{On the geodetic and shearfree condition in four and higher dimensions}

\setcounter{equation}{0}

\label{app_shearfree}

\subsection{Goldberg-Sachs in four dimensions}

\label{app_GS_4D}

The four-dimensional Goldberg-Sachs theorem is usually expressed as a statement about a null congruence being geodesic ($\kappa=0$) and shearfree ($\sigma=0$). As well-known, this property can be interpreted as different but equivalent geometrical statements, which we now briefly review. This will later be useful for discussing the higher dimensional case.

\subsubsection{Integrability of two-spaces}

\label{subsubsec_integr_d4}

By considering the commutator $[\delta,D]$ (see, e.g., p.~77 of \cite{Stephanibook})
one immediately sees that  $\bl$ is geodesic and shearfree if and only if the distribution
\be
 {\cal D}=\mbox{Span}\{\bbm,\bl\} 
\ee
is integrable (cf.~\cite{penrosebook2}), i.e. $[{\cal D},{\cal D}]\subset{\cal D}$.
A similar conclusion holds of course for $\bar{\cal D}$. The geodesic{\&}shearfree property can thus be seen as a statement about the null bivector $\bl\wedge\bbm$ instead of one about the vector field $\bl$. While the vector field $\bl$ defines a privileged real null direction lying in ${\cal D}\cap\bar{\cal D}$, $\bbm$ is not fixed uniquely (null rotations with $\bl$ fixed, spins and boost can be performed, under which $\bl\wedge\bbm$ is fixed, up to a rescaling, and $\kappa=0=\sigma$ is preserved). The significance of this viewpoint appears clearly when considering complex extensions of the Goldberg-Sachs theorem \cite{PlebHac75}.

\subsubsection{Conformal structure of the screen space}

Given a generic null vector field $\bl$, its flow preserves the conformal structure of the ``screen space'' $L^\bot/L$ when \cite{RobTra83} (see also \cite{NurTra02,Trautman02a,Trautman02b})
\be 
 {\cal L}_\l g=\rho g+\l\otimes\xi+\xi\otimes\l  \quad \Leftrightarrow  \quad \kappa=0=\sigma .
 \label{Lie_4}
\ee

In four dimensions only, this is equivalent to requiring that the complex structure of the screen space is preserved by the flow of $\bl$ \cite{RobTra83,NurTra02,Trautman02a,MasTag08}.

This condition can also be derived by imposing Maxwell equations on a null field (two-form) \cite{robinsonnull} (cf. also, e.g., \cite{Stephanibook,penrosebook2,NurTra02,Trautman02a,Trautman02b}).

\subsection{Conformal structure of the screen space: the standard geodesic{\&}shearfree condition}

\label{subsec_screenHD}

Ref.~\cite{RobTra83} (see also \cite{Trautman02b}) proposed to extend (\ref{Lie_4}) to higher dimensions as a definition of the geodesic{\&}shearfree condition. As it turns out,
\be 
 {\cal L}_\l g=\rho g+\l\otimes\xi+\xi\otimes\l  \quad \Leftrightarrow  \quad \kappa_{i}=0=S_{ij}-\frac{S_{kk}}{n-2}\delta_{ij} ,
 \label{Lie}
\ee
so that indeed $\bl$ is geodesic and shearfree in the standard sense. Some comments on algebraically special Einstein spacetimes admitting such a congruence have been already given in section~\ref{subsec_OC} and appendix~\ref{app_shearfreetwist}.

Note that for $n>4$ eq.~(\ref{Lie}) does {\em not} follow from Maxwell's equations for a null field (intended again as a two-form), while geodeticity {($\kappa_{i}=0$)} does, and shear is non-zero for expanding solutions \cite{Ortaggio07,Durkeeetal10}.

\subsection{Existence of an optical structure}

\label{subsubsec_optical}

\subsubsection{Complex notation}

For later discussions it will be useful to use a complex basis. Namely, in {\em even} dimensions one can take the frame $(\bl,\bn,\bmu_A,\bar\bmu_A)$ and in {\em odd} dimensions the frame $(\bl,\bn,\bmu_A,\bar\bmu_A,\bx)$. Except for the unit spacelike vector $\bx$, all the vectors are null, the complex $\bmu_A$ are defined by
\be
  \bmu_2=\frac{1}{\sqrt{2}}(\mbb{2}+i\mbb{3}) , \qquad \bmu_4=\frac{1}{\sqrt{2}}(\mbb{4}+i\mbb{5}) , \qquad \ldots ,
  \label{mu}
\ee
and ${\bar\bmu_A}$ by their complex conjugates, where $A=2\mu$, $\mu=1,\ldots,(n-2-\epsilon)/2$, with $\epsilon=0,1$ for even and odd dimensions, respectively (in 4d this would be the standard NP frame). The metric becomes
\be
 g=\bl\otimes \bn+\bn\otimes\bl+\bmu_A\otimes\bar\bmu_A+\bar\bmu_A\otimes\bmu_A+\epsilon \bx\otimes\bx .
 \label{complex_g}
\ee

Then one can define the complex counterpart of the Ricci rotation coefficients. These are defined as an obvious extension of the real coefficients, e.g.,
\be
 \cL_{AB}=\mu^a_A\mu^b_B\nabla_b l_{a} , \qquad \cL_{A\bar B}=\mu^a_A\bar\mu^b_B \nabla_b l_{a} , \qquad \cM{A}{C}{B}=\mu^a_C\mu^b_{{B}}\nabla_b \mu_{A a}, 
\ee
and other coefficients (and their complex-conjugates) are defined similarly.

\subsubsection{Optical structure}

\label{subsubsec_OS}

Ref.~\cite{HugMas88} studied the consequences of Maxwell's equations for a null field defined as an $n/2$-form in $n$ even dimensions. From these, they arrived at a generalization of the geodesic{\&}shearfree condition different from the $\kappa=0=\sigma$ condition discussed in section~\ref{subsec_screenHD} (see also \cite{NurTra02,Trautman02a,Trautman02b,MasTag08}). Recently, this has been extended also to odd dimensions \cite{Taghavi-Chabert11}. Namely, consider the totally null  $(n-\epsilon)/2$-dimensional distribution (recall that $A$ in $\mu_A$ can be only even)
\be
 {\cal D}=\mbox{Span}\{\bl,\bmu_2,\ldots,\bmu_{(n-2-\epsilon)}\} .
 \label{D_gen}
\ee
If ${\cal D}$ and ${\cal D}^\bot$ are integrable, i.e.,
\be
 [{\cal D},{\cal D}]\subset{\cal D} , \qquad [{\cal D}^\bot,{\cal D}^\bot]\subset{\cal D}^\bot ,
 \label{Robinson}
\ee
${\cal D}$ is said to define an ``{\em optical structure}'' \cite{Taghavi-Chabert11} (note that ${\cal D}^\bot={\cal D}$ in even dimensions).  In 4d, eq.~\eqref{Robinson} reduces to the standard conditions $\kappa=0=\sigma$ (as discussed in section~\ref{subsubsec_integr_d4}), which indeed corresponds to the conditions coming from the Mariot-Robinson theorem \cite{Stephanibook,penrosebook2} for a Maxwell null two-form. In  higher dimensions, using the complex counterpart of the commutators given in \cite{Coleyetal04vsi}, one finds that (\ref{Robinson}) is equivalent to 
\be
 \cL_{A0}=\cL_{x0}=\cL_{AB}=\cL_{Ax}=\cL_{xA}=\cM{A}{B}{0}=\cM{A}{x}{0}=\cM{A}{B}{C}=\cM{A}{B}{x}=\cM{x}{A}{B}=0 .
 \label{optical_complex}
\ee
(In fact one first obtains conditions such as $\cM{A}{B}{C}=\cM{C}{B}{A}$, but $\cM{A}{B}{C}=-\cM{B}{A}{C}$, etc.) The equations above represent the conditions obtained when $n$ is odd, however, if $n$ is even one can simply drop all equations containing $x$ (this will be understood from now on).

Note, in particular, that the vector $\bl$ is {\em geodesic} ($\cL_{A0}=\cL_{x0}=0$), but generally is not required to be shearfree (except when $n=4$). 
In even dimensions it has also been observed that (\ref{Robinson}) means that the complex structure of the screen space is preserved \cite{NurTra02,Trautman02a,MasTag08}. We have shown in \cite{Ortaggioetal12} that in 5d a very large class of algebraically special Einstein spacetimes possesses an optical structure. In appendix \ref{app_OS5d} we extend the results of \cite{Ortaggioetal12} by showing that, in fact, all algebraically special 5d Einstein spacetimes possess (at least) one optical structure.

\paragraph{Optical structure in six dimension} Let us rewrite the conditions \eqref{optical_complex} in real notation in the special case $n=6$, which is useful for the discussion in the main text (see \cite{Ortaggioetal12} for the case $n=5$). One readily gets
\beqn
 & & \kappa_{i}=0 \quad (i=2,\ldots,5) , \nonumber \\
 & & \rho_{22}=\rho_{33}, \quad \rho_{23}=-\rho_{32}, \quad \rho_{44}=\rho_{55}, \quad \rho_{45}=-\rho_{54},  \nonumber \\
 & & \rho_{24}=\rho_{35}, \quad \rho_{42}=\rho_{53}, \quad \rho_{34}=-\rho_{25}, \quad \rho_{43}=-\rho_{52}, \nonumber \\
 & & \M{2}{4}{0}=\M{3}{5}{0}, \quad \M{3}{4}{0}=-\M{2}{5}{0}, \qquad\qquad\qquad\qquad\qquad\qquad\qquad (n=6) \label{6D_OS} \\
 & & \M{2}{4}{2}-\M{3}{4}{3}=\M{2}{5}{3}+\M{3}{5}{2}, \quad -\M{2}{5}{2}+\M{3}{5}{3}=\M{2}{4}{3}+\M{3}{4}{2}, \nonumber \\ 
 & & \M{2}{4}{4}-\M{2}{5}{5}=\M{3}{4}{5}+\M{3}{5}{4}, \quad \M{2}{4}{5}+\M{2}{5}{4}=-\M{3}{4}{4}+\M{3}{5}{5} . \nonumber 
\eeqn

In particular, if $\bl$ is twistfree then $\rhob$ has two pairs of repeated eigenvalues.

\subsection{Optical constraint}

Based on results for Kerr-Schild spacetimes, Ref.~\cite{OrtPraPra09} put forward yet another possible generalization of the shearfree condition for mWANDs in higher dimensions. This is the so called ``optical constraint'', already discussed in section~\ref{subsec_OC}, which involves only the null direction $\bl$ (as opposed to the optical structure discussed above). The Lorentz transformation freedom of null rotations preserving $\bl$, boosts and spins is thus retained in this case (indeed spins can be used to arrive at the canonical form~\eqref{canformL}), see also \cite{OrtPraPra10} for related comments. We have shown in \cite{Ortaggioetal12} that in 5d a very large class of algebraically special Einstein spacetimes admits an mWAND obeying the optical constraint, {{and Ref.~\cite{Wylleman_priv} extended our result to prove that in fact {\em all} algebraically special Einstein spacetimes admit such an mWAND} (see also \cite{OrtPraPra12rev})}.
Note that in 4d the optical constraint is a necessary condition for $\bl$ to be a repeated principal null direction but is not {sufficient} \cite{Ortaggioetal12}.

\subsection{Integrability of a (complex) two-dimensional totally null distribution}  

\subsubsection{${\cal D}_{23}$ integrable}

As a further generalization of the geodesic{\&}shearfree condition one can consider the integrability of the complex two-dimensional totally null distribution
\be 
 {\cal D}_{23}=\mbox{Span}\{\bmu_{2},\bl\} .
 \label{D}
\ee
It is easy to show that ${\cal D}_{23}$ is integrable if and only if 
\be
 \cL_{20}=0, \qquad \cL_{B2}=\cM{2}{B}{0}, \qquad \cL_{\bar B2}=\cM{2}{\bar B}{0} \quad (B\neq 2) , \qquad \cL_{x2}=\cM{2}{x}{0} .
 \label{2integr}
\ee
In 4d this again reduces to the standard $\kappa=0=\sigma$ condition.

In particular, in six dimensions \eqref{2integr} can be rewritten in real notation as
\beqn
 & & \kappa_{2}=0=\kappa_{3} , \qquad  \rho_{22}=\rho_{33}, \quad \rho_{23}=-\rho_{32}, \nonumber \label{D23_6d} \\
 & & \rho_{42}=\M{2}{4}{0}, \quad \rho_{43}=\M{3}{4}{0}, \quad \rho_{52}=\M{2}{5}{0}, \quad \rho_{53}=\M{3}{5}{0}  \qquad\qquad (n=6) . 
\eeqn

\subsubsection{${\cal D}_{23}$ integrable with totally geodesic integral surfaces}  

We can strengthen the above conditions by further requiring the integral surfaces of ${\cal D}_{23}$ to be {\em totally geodesic}. The corresponding equations read
\beqn
 & & \cL_{B0}=0=\cL_{x0}, \qquad \cL_{B2}=0=\cL_{x2}, \qquad \cL_{\bar B2}=0 \quad (B\neq 2) , \nonumber \\
 & & \cM{2}{B}{0}=0=\cM{2}{x}{0}, \qquad \cM{2}{B}{2}=0=\cM{2}{x}{2}, \qquad \cM{2}{\bar B}{0}=0 \quad (B\neq 2) , \qquad \cM{2}{\bar B}{2}=0 \quad (B\neq 2) .
 \label{2totgeod}
\eeqn
Now $\bl$ is necessarily geodesic ($\cL_{B0}=0=\cL_{x0}$). In 4d \eqref{2totgeod} is equivalent to \eqref{2integr} because $B=2$ is the only possibility (and there are no $x$-components). In real notation \eqref{2totgeod} can be rewritten as (where we define $\hat k\neq 2,3$)
\beqn
 & & \kappa_{i}=0 , \qquad  \rho_{22}=\rho_{33}, \quad \rho_{23}=-\rho_{32}, \qquad \rho_{\hat k2}=0=\rho_{\hat k3} \nonumber \\
 & & \M{2}{\hat k}{0}=0=\M{3}{\hat k}{0} , \label{2totgeod_real} \\
 & & \M{2}{\hat k}{2}=\M{3}{\hat k}{3}, \quad  \M{2}{\hat k}{3}=-\M{3}{\hat k}{2} \nonumber .
\eeqn

\section{Optical structures in five dimensions}

\label{app_OS5d}

Proposition~4 of \cite{Ortaggioetal12} gives a set of sufficient conditions for a five-dimensional Einstein spacetime of type II or more special to possess an {\em optical structure} \cite{Taghavi-Chabert11}. In particular, it shows that, except possibly for Kundt spacetimes (and for a special subclass ``(iii)'' of genuine type II, later proven not to exist \cite{Wylleman_priv}), all algebraically special Einstein spacetimes admit an optical structure in five dimensions. It is the purpose of this  appendix to show that this in fact holds also for Kundt spacetimes. (Obviously, if a complex optical structure is integrable its complex conjugate is integrable too and this will be understood in the following.) Combining this with the proof of \cite{Wylleman_priv} that the special subclass (iii) of genuine type II is empty, we arrive at 
\begin{prop}
\label{prop_integrab_5D}
In a five-dimensional Einstein spacetime admitting a multiple WAND~$\lb$ there always exists an optical structure. In the case of type D spacetimes there exist in fact (at least) two optical structures. 
\end{prop}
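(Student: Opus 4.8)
The plan is to prove Proposition~\ref{prop_integrab_5D} in two stages, separating the cases already settled in the literature from the single case that is genuinely new here, namely Kundt spacetimes. Throughout I would characterize an optical structure, in the sense of \eqref{Robinson}, by the vanishing of the complex rotation coefficients \eqref{optical_complex} in the frame of section~\ref{subsubsec_optical}. First I would dispose of the non-Kundt spacetimes: by Proposition~4 of \cite{Ortaggioetal12}, every five-dimensional algebraically special Einstein spacetime that is neither Kundt nor a member of the exceptional genuine type~II subclass ``(iii)'' carries an optical structure; since \cite{Wylleman_priv} shows that subclass ``(iii)'' is empty, only the Kundt case remains.

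For a Kundt spacetime one has $\rho_{ij}=0$ and $\kappa_i=0$, so that every coefficient in \eqref{optical_complex} built solely from $\nabla\bl$ --- that is, $\cL_{A0}$, $\cL_{x0}$, $\cL_{AB}$, $\cL_{Ax}$ and $\cL_{xA}$ --- vanishes identically. Next I would fix a parallelly propagated frame along the (geodesic, twistfree) $\bl$, which is always available here and sets $\M{i}{j}{0}=0$, hence $\cM{A}{x}{0}=0$ (and $\cM{A}{B}{0}=0$, which in any case vanishes by antisymmetry). In five dimensions the complex index $A$ takes the single value $A=2$, so $\cM{A}{B}{C}$ and $\cM{A}{B}{x}$ vanish automatically from $\cM{A}{B}{C}=-\cM{B}{A}{C}$. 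Thus all of \eqref{optical_complex} holds except for the single complex condition $\cM{x}{2}{2}=0$, which in real notation (with $\bx=\mbb{4}$) reads $\M{4}{2}{2}=\M{4}{3}{3}$ and $\M{4}{2}{3}=-\M{4}{3}{2}$; geometrically it states that the second fundamental form of the two-plane $\mbox{Span}\{\mbb{2},\mbb{3}\}$ along $\bx$ is pure trace, i.e.\ ``shearfree'' within the transverse three-space.

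The crux --- and the step I expect to be the main obstacle --- is therefore to establish that $\cM{x}{2}{2}$ can be made to vanish. My approach would be to exploit the residual frame freedom: after imposing parallel propagation there remains an $SO(3)$ spin acting on the transverse triad $\{\mbb{2},\mbb{3},\mbb{4}\}$ together with the freedom in the choice of $\bx$, while $\cM{x}{2}{2}=0$ imposes only two real conditions, so the problem is formally underdetermined. I would write out the transformation of $\cM{x}{2}{2}$ under these spins and show that a consistent solution exists --- either directly, by passing to the canonical Kundt line element (cf.\ \cite{PodZof09}) and reducing the requirement to a statement about the transverse Riemannian geometry, or by invoking the Einstein and algebraically special constraints through the Bianchi identities and the Ricci equation \eqref{11k} (which for $\rho=0$ must be supplemented by the remaining NP equations) to force the needed relations. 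The delicate point is not the pointwise count of free parameters but the global consistency, i.e.\ the integrability, of the spin field so obtained.

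Finally, for the refinement concerning type~D I would use that such spacetimes possess (at least) two distinct multiple WANDs $\bl$ and $\bn$, both expandable to geodesic mWANDs by \cite{DurRea09}. Applying the construction above to each mWAND produces optical structures ${\cal D}_1\ni\bl$ and ${\cal D}_2\ni\bn$; these are genuinely distinct, since $\bl\in{\cal D}_1$ but $\bl\notin{\cal D}_2$. Collecting the three contributions --- \cite{Ortaggioetal12} for the non-Kundt case, \cite{Wylleman_priv} for the emptiness of subclass ``(iii)'', and the Kundt analysis above --- yields the existence of an optical structure for every five-dimensional Einstein spacetime admitting a multiple WAND, and of two in the type~D case.
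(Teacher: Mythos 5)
Your reduction to the Kundt case, the elimination of all the $\cL$-coefficients from $\rho_{ij}=0=\kappa_i$, and the observation that in five dimensions the only surviving condition in \eqref{optical_complex} is the single complex equation $\cM{x}{2}{2}=0$ all match the paper's strategy and are correct. The problem is that this last condition is precisely where your argument stops: you correctly flag it as ``the main obstacle'' and then offer only a parameter count ($SO(3)$ spin freedom versus two real conditions) together with two possible routes, neither of which is carried out. A pointwise count of frame freedom does not establish the existence of a \emph{frame field} satisfying a first-order PDE system, and — contrary to your second suggested route — the Einstein equations and the algebraically special property do \emph{not} force $\cM{x}{2}{2}=0$ for an arbitrary transverse triad; the condition genuinely restricts the choice of $\{\mbb{2},\mbb{3},\mbb{4}\}$ and must be solved for, not derived. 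So as it stands the proposal has a genuine gap at its central step.

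The way the paper closes this gap is worth internalizing, because it is exactly the ``direct'' route you gesture at. Writing the Kundt metric in the canonical form \eqref{Kundt_gen} and choosing the $\mbb{i}$ to lie in (and be independent of $r$ along) the transverse Riemannian three-space $g_{\alpha\beta}(u,x)$, one shows that $m_{(i)\a;\beta}=m_{(i)\a||\beta}$, i.e.\ the spacetime coefficients $\M{i}{j}{k}$ coincide with the connection coefficients $\Mt{i}{j}{k}$ of the transverse metric. The condition $\cM{x}{2}{2}=0$ (equivalently $\M{2}{4}{2}=\M{3}{4}{3}$, $\M{2}{4}{3}=-\M{3}{4}{2}$, the second line of \eqref{Dorth_integr_M}) then becomes $\mu_{a||b}m_{(4)}^a\mu^b=0$ for $\bmu=\frac{1}{\sqrt 2}(\mbtt{2}+i\mbtt{3})$, which — since $\mu^a\mu_{a||b}=0$ automatically — is just the statement that $\bmu$ is a complex null geodesic direction field of the three-dimensional Riemannian transverse geometry. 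Such a field always exists locally, which settles the existence question that your frame-freedom argument leaves open. Your treatment of the non-Kundt cases via Proposition~4 of \cite{Ortaggioetal12} and \cite{Wylleman_priv}, and your argument that the two optical structures attached to the two mWANDs of a type~D spacetime are distinct (since $\bl$ cannot lie in a totally null distribution containing $\bn$), are both fine.
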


\begin{proof}
Let us  show that Einstein spacetimes of the Kundt class always possess an optical structure in five dimensions. In other words, we need to show that in such spacetimes there always exists a null frame $\{\bl,\bn,\mbb{2},\mbb{3},\mbb{4}\}$ such that 
the totally null distribution
\be
 {\cal D}=\mbox{Span}\{\mbb{2}+i\mbb{3},\bl\} ,
\ee
and its orthogonal complement 
\be
 {\cal D}^\bot=\mbox{Span}\{\mbb{2}+i\mbb{3},\mbb{4},\bl\} ,
\ee 
are both integrable. This is equivalent to \cite{Ortaggioetal12} (see also appendix~\ref{app_shearfree}, and 
{\eqref{L1i_M}} for the definition of $\M{a}{b}{c}$)
\beqn
 & & \kappa_{i}=0, \qquad \rho_{33}=\rho_{22}, \qquad \rho_{32}=-\rho_{23},\qquad \rho_{24}=0=\rho_{34}, \qquad \rho_{42}=0=\rho_{43},  			  \label{Dorth_integr_k_rho} 
 \\  & & \M{2}{4}{0}=0=\M{3}{4}{0} , \qquad \M{2}{4}{2}=\M{3}{4}{3}, \qquad \M{2}{4}{3}=-\M{3}{4}{2} .
 \label{Dorth_integr_M} 
\eeqn

Kundt spacetimes admit a metric in the form \cite{Coleyetal03,ColHerPel06,PodZof09} 
\be
 \d s^2 =2\d u\left[\d r+H(u,r,x)\d u+W_\alpha(u,r,x)\d x^\alpha\right]+ g_{\alpha\beta}(u,x) \d x^\alpha\d x^\beta , \label{Kundt_gen}
\ee
where $\alpha,\beta=2,3,4$ in five dimensions. Here $\bl=\partial_r$ is a geodesic, twistfree, shearfree, non-expanding mWAND, so that \eqref{Dorth_integr_k_rho} is automatically satisfied. Now, define a null frame with $\ell_a\d x^a=\d u$, $n_a\d x^a=\d r+H\d u+W_\alpha\d x^\alpha$ and the spacelike vectors $\mbb{i} $ living in the three-dimensional transverse (Euclidean) space spanned by the $x^\alpha$ (their components will be, in particular, independent of $r$). Then one immediately finds
\be
 \M{i}{j}{0}=0, 
\ee
so that the first of \eqref{Dorth_integr_M} is satisfied in this frame. 

Next, using the Christoffel symbols given in \cite{PodZof09} one can check that  
\be
 m_{(i)\a;\beta}=m_{(i)\a||\beta} ,
\ee
where the covariant derivatives on the r.h.s. is taken w.r.t. the transverse metric $g_{\a\beta}$. 

It follows that
\be
 {\M{i}{j}{k}} ={\Mt{i}{j}{k}} , 
\ee
where the connection coefficients on the r.h.s. are those computed w.r.t. the transverse metric $g_{\a\beta}$ (and the o.n. frame vectors of the transverse space are simply the ``projections'' $\tilde m_{(i)\a}$  of the $m_{(i)\a}$ vectors). 

One then finds
\be
 [\mbtt{2}+i\mbtt{3},\mbtt{4}]=(\Mt{4}{2}{4}+i\Mt{4}{3}{4})\mbtt{4} +[-\Mt{2}{4}{2}+i(\Mt{2}{3}{4}-\Mt{2}{4}{3})]\mbtt{2}+[\Mt{3}{2}{4}-\Mt{3}{4}{2}-i\Mt{3}{4}{3}]\mbtt{3} ,
\ee
so that $\mbox{Span}\{\mbtt{2}+i\mbtt{3},\mbtt{4}\}$ is integrable iff
\be
 \Mt{3}{4}{3}=\Mt{2}{4}{2} , \qquad \Mt{3}{4}{2}=-\Mt{2}{4}{3} .
 \label{integr_234}
\ee 
Proving the integrability of ${\cal D}$ and ${\cal D}^\bot$ is thus now reduced to proving the integrability of $\mbox{Span}\{\mbtt{2}+i\mbtt{3},\mbtt{4}\}$. The transverse frame is arbitrary, so we just need to show that there exists at least one frame satisfying this integrability property.

If we define the complex null vector field
\be
 \bmu=\frac{1}{\sqrt{2}}(\mbtt{2}+i\mbtt{3}) , 
 \label{def_mu}
\ee
using \eqref{integr_234} the required integrability condition $[\bmu,\mbtt{4}]=\alpha\bmu+\beta\mbtt{4}$ reads
\be
 \mu_{a||b}m_{(4)}^a\mu^b=0 .
\ee 
Thus by choosing a complex null geodesics $\bmu$ in the transverse three-space we automatically obtain  the integrability of the corresponding  distributions ${\cal D}$ and ${\cal D}^\bot$ and {(together with the result of \cite{Wylleman_priv})} our proof is complete.  
\end{proof}

\section{Shearfree twisting spacetimes (even dimensions)}

\label{app_shearfreetwist} 

Twisting geodesic mWANDs with zero shear are forbidden in odd dimensions \cite{OrtPraPra07} but they are permitted in {\em even} dimensions and they have necessarily $\det(\rhob)\neq 0$ (as can be easily seen in a frame adapted to $A_{ij}$, using the fact that $S_{ij}\propto\delta_{ij}$). Here we present an explicit example in six dimensions. {To our knowledge, this is the first such example that has been identified.} 

First, consider the six-dimensional Ricci flat Taub-NUT metric \cite{ManSte04} 
\beqn
 \d s^2= & & -F(r)(\d t-2n_1\cos\theta_1\d\phi_1-2n_2\cos\theta_2\d\phi_2)^2+\frac{\d r^2}{F(r)} \nonumber \\
 & & {}+(r^2+n_1^2)(\d\theta_1^2+\sin\theta_1^2\d\phi_1^2)+(r^2+n_2^2)(\d\theta_2^2+\sin\theta_2^2\d\phi_2^2) ,
 \label{MasSte6D}
\eeqn
where
\be
 F(r)=\frac{r^4/3+(n_1^2+n_2^2)r^2-2mr-n_1^2n_2^2}{(r^2+n_1^2)(r^2+n_2^2)} .
\ee
We observe that this is a spacetime of type D. A geodetic mWAND is given by
\be
 \ell_a\d x^a=\d t+F(r)^{-1}\d r-2n_1\cos\theta_1\d\phi_1-2n_2\cos\theta_2\d\phi_2 ,
\ee 
while a second one can simply be obtained by reflecting $\bl$ as $t\to-t$, $\phi_1\to-\phi_1$, $\phi_2\to-\phi_2$ \cite{PraPraOrt07}. Using the frame vectors
\beqn
 & & m_{(2)a}\d x^a=\sqrt{r^2+n_1^2}\d\theta_1, \qquad m_{(3)a}\d x^a=\sqrt{r^2+n_1^2}\sin\theta_1\d\phi_1, \nonumber \\
 & & m_{(4)a}\d x^a=\sqrt{r^2+n_2^2}\d\theta_2, \qquad m_{(5)a}\d x^a=\sqrt{r^2+n_2^2}\sin\theta_2\d\phi_2,  
\eeqn
one finds
\be
\rhob = \left(
\begin{array}{cccc} 
      \displaystyle \frac{r}{r^2+n_1^2} & \displaystyle -\frac{n_1}{r^2+n_1^2} & 0 & 0\\ 
			\displaystyle \frac{n_1}{r^2+n_1^2} & \displaystyle \frac{r}{r^2+n_1^2}  & 0 & 0 \\ 
			0 & 0 & \displaystyle \frac{r}{r^2+n_2^2} & \displaystyle -\frac{n_2}{r^2+n_2^2}\\ 
			0 & 0 & \displaystyle \frac{n_2}{r^2+n_2^2} & \displaystyle \frac{r}{r^2+n_2^2} 
  \end{array}
 \right) . 
\ee

One can easily check that $\rhob$  obeys the optical constraint~(\ref{OC2}). {Moreover, we also observe that in the spacetime~\eqref{MasSte6D} the maximally totally null distribution ${\cal D}=\mbox{Span}\{\mbb{2}+i\mbb{3},\mbb{4}+i\mbb{4},\bl\}$
defines an {\em optical structure}  (concept introduced and discussed in \cite{HugMas88,NurTra02,Trautman02a,Trautman02b,Taghavi-Chabert11,Taghavi-Chabert11b}), and both the totally null distributions ${\cal D}_{23}=\mbox{Span}\{\mbb{2}+i\mbb{3},\bl\}$ and ${\cal D}_{45}=\mbox{Span}\{\mbb{4}+i\mbb{5},\bl\}$ are integrable, with totally geodesic integral surfaces (see appendix~\ref{app_shearfree} for the corresponding definitions and conditions).}
The null vector field $\bl$ is generically shearing ($S_{22}=S_{33}\neq S_{44}=S_{55}$), however in the special case $n_1=n_2$ (corresponding to the solutions of \cite{AwaCha02}) it becomes {\em shearfree} (while still being expanding and twisting). The class of shearfree twisting spacetimes is thus non-empty in higher dimensions.

\section{{Non-degenerate, non-twisting geodesic mWANDs violating the optical constraint ($n\ge 7$)}}

\setcounter{equation}{0}

\label{app_violating}

In the main text we have seen examples of Einstein spacetimes admitting a non-twisting mWAND violating the optical constraint, see e.g. table~\ref{tab_distinct_eigenval}. However, all of them have a degenerate $\rhob$ (i.e., $m<n-2$). In this appendix we provide also some examples with $m=n-2$, relevant to theorem~\ref{prop_GSHD} (point \eqref{Phi0}) and proposition~\ref{prop_non_deg2} (point 2.).

\subsection{Metric}

Using the theory of conformal Einstein spaces \cite{Brinkmann25} (reviewed, e.g., in \cite{petrov}), by taking a double Brinkmann warp one can construct the following Einstein space  (satisfying $R_{ab}=2\Lambda g_{ab}/(n-2)$):
\be
 \d s^2=\lambda r^2\d u^2+2\d u\d r+(\lambda ur-1)^2\d\sigma_\lambda^2+r^2\d\Sigma_0^2 , 
 \label{general_violating}
\ee
where 
\be
 \lambda=2\frac{\Lambda}{(n-1)(n-2)} ,
\ee
$\d\sigma_\lambda^2$ is a Riemannian Einstein space of dimension $n_\sigma$ with Ricci scalar $R_\sigma=n_\sigma(n_\sigma-1)\lambda$ and $\d\Sigma_0^2$ a Ricci-flat Riemannian space of dimension $n_\Sigma$. In order for $\d s^2$ to have a non-zero Weyl tensor, the metrics $\d\sigma_\lambda^2$ or $\d\Sigma_0^2$ cannot be both of constant curvature, so that either $n_\sigma\ge 4$ or $n_\Sigma\ge 4$ (or both). The complete spacetime has thus dimension $n=2+n_\sigma+n_\Sigma\ge 7$ (unless it is of constant curvature, which is of no interest to us).

Let us define the vector field $\bl$ with covariant and contravariant components
\be
 \ell_a\d x^a=\d u , \qquad \ell^a\pa_a=\pa_ r ,
 \label{l_ex}
\ee
which is obviously  null and hypersurface-orthogonal, thus automatically geodesic and twistfree, and $r$ is an affine parameter along it. This can be accompanied by another null vector $\bn$ (also hypersurface-orthogonal)
\be
 n_a\d x^a=\d r+\frac{\lambda}{2}r^2\d u ,
 \label{n_ex}
\ee
which satisfies the normalization condition $\bl\cdot\bn=1$.

\subsection{Weyl tensor}

From the results of \cite{Brinkmann25} together with those on direct product spacetimes \cite{PraPraOrt07} it readily follows that
\be
 C_{uabc}=0=C_{rabc} .
 \label{C_ur_ex}
\ee 

Let us consider a set of $n-2$ spacelike o.n. vectors $\bbm_{(i)}$ orthogonal to $\bl$ and $\bn$, composed of a subset of vectors denoted by $i=A, B, \ldots$   living in the subspace of $\d\sigma_\lambda^2$ and a subset denoted by $i=I,J, \ldots$ living in the subspace of $\d\Sigma_0^2$. Then the only non-zero Weyl frame components are given by $C_{ABCD}$ and $C_{IJKL}$, and their $r$-dependence is (recall the first definition in \eqref{bw0}) 
\be
 \Phi_{ABCD}=\frac{1}{(\lambda ur-1)^2}C^0_{ABCD} , \qquad \Phi_{IJKL}=\frac{1}{r^2}C^0_{IJKL} , 
 \label{C_ABIJ_ex}
\ee
where quantities with superscript $^0$ do not depend on $r$ (these are in fact the Weyl components of the respective ``subspaces''). 

In particular, one has
\be
 \WD{ij} =0 .
\ee
Note also that one has generically $W_{ij}\neq 0$ (recall  definition \eqref{def_W} and constraint~\eqref{W_phi}).

\subsection{Multiple WAND(s) and optics}

From the above results it follows that both $\bl$ and $\bn$ are {double} WANDs and the spacetime is thus of type~D. In fact a bit more than that, since 
\be
C_{abcd}\ell^d=0=C_{abcd}n^d \label{Wandcond}
\ee
 (cf. \cite{PraPra05,Ortaggio09} for the meaning of such conditions in terms of the Weyl type)  and the type is D(abd).\footnote{The construction {of metric \eqref{general_violating}} is carried out more naturally in a different coordinate system (not adapted to $\bl$) in which $\bl$ and $\bn$ are related by a coordinate transformation (``time-reflection'') that leaves the metric invariant: therefore they share the same geometric properties.}

The optical matrix $\rhob$ of $\bl$ is obviously symmetric, it is moreover diagonal and non-degenerate with components
\be
 \rho_{AB}=\frac{\lambda u}{\lambda ur-1}\delta_{AB} , \qquad \rho_{IJ}=\frac{1}{r}\delta_{IJ} .
\ee
There are thus two eigenvalue-blocks of dimension $n_\sigma$ and $n_\Sigma$, or $[n_\sigma,n_\Sigma]$. For instance, for $n=7$ we can construct explicit solutions with eigenvalues $\{a,a,a,a,b\}$, for $n=8$ we can have 
$\{a,a,a,a,a,b\}$ and $\{a,a,a,a,b,b\}$ and so on (in all cases $a\neq0\neq b$, $a\neq b$). These provide examples for of the shearing spacetimes of  theorem~\ref{prop_GSHD}  
and proposition~\ref{prop_non_deg2}, although an eigenvalue structure {more general than the one of metric \eqref{general_violating} can in principle exist}. {In particular, the optical constraint is clearly violated (cf. eq.~\eqref{canformL}).} Note also that we cannot construct a six-dimensional example with this method. By setting the cosmological constant to zero in the above metrics ($\lambda=0$) one obtains Ricci-flat spacetimes -- however, these are  direct products and $\rhob$ becomes degenerate, which is of no interest to the present discussion.

By taking direct products one can trivially generate many similar examples with a degenerate $\bl$. These will have $\WDS{ij} \neq 0$, however still with $\WDA{ij} =0$ (see \cite{PraPraOrt07}), in agreement with point (i) of theorem~\ref{prop_GSHD}.

\subsubsection{On possible additional (m)WANDs}

\label{subsubsec_additionalW}

One might wonder whether spacetimes \eqref{general_violating} admit other multiple (double) WANDs (different from $\bl$ and $\bn$, eqs.~\eqref{l_ex} and \eqref{n_ex}) and in particular whether those can obey the optical constraint. Take a generic null vector $\bk=\alpha\pa_u+\beta\pa_r+\gamma^A\bbm_{(A)}+\gamma^I\bbm_{(I)}$ (where the $\bbm_{(A)}$ [$\bbm_{(I)}$] have components only in the subspace of $\d\sigma_\lambda^2$ [$\d\Sigma_0^2$]), i.e., $\alpha^2\lambda r^2+2\alpha\beta+\delta_{AB}\gamma^A\gamma^B+\delta_{IJ}\gamma^I\gamma^J=0$. Since for $\gamma^A=0=\gamma^I$ $\bk$ gives the directions of $\bl$ and $\bn$ (respectively for $\alpha=0$ and $\beta=-\frac{1}{2}\alpha\lambda r^2$), we now assume that $\gamma^A$ and $\gamma^I$ are not both zero (so that $\alpha\neq0$). Using \eqref{C_ur_ex}, \eqref{C_ABIJ_ex} and the Bel-Debever criteria \cite{Ortaggio09} it follows that: (i) $\bk$ is a WAND $\Leftrightarrow C^0_{ABCD}\gamma^A\gamma^C=0=C^0_{IJKL}\gamma^I\gamma^K$  
(so that, using also \eqref{C_ur_ex}, one gets $C_{abcd}k^bk^d=0$); (ii) $\bk$ is an mWAND $\Leftrightarrow C^0_{ABCD}\gamma^A=0=C^0_{IJKL}\gamma^I$ (in which case, using also \eqref{C_ur_ex}, one gets $C_{abcd}k^d=0$). 

These conditions on the Weyl tensors of $\d\sigma_\lambda^2$ and $\d\Sigma_0^2$ are generically not satisfied (see an example in section~\ref{subsubsec_ex2}), so in general spacetimes \eqref{general_violating} do not admit any other WANDs (not even single) apart from $\bl$ and $\bn$. However, it is also clear that in special cases additional mWANDs may exist. For example, when either $\d\sigma_\lambda^2$ or $\d\Sigma_0^2$ are conformally flat (which is necessarily the case if $n_\sigma<4$  or $n_\Sigma<4$) spacetimes \eqref{general_violating} admit a continuous infinity of mWANDs (generically non-geodesic but some can be geodesic, see an example in section~\ref{subsubsec_ex1}).

\subsection{An explicit example with additional mWANDs}

\label{subsubsec_ex1}

For the sake of definiteness, one can for instance construct an explicit solution for $n=7$ by taking (\ref{general_violating}) with 
\beqn
 & & \d\sigma_\lambda^2=V(\rho)\d\tau^2+V^{-1}(\rho)\d\rho^2+\rho^2(\d\theta^2+\sin\theta^2\d\phi^2) , \qquad V(\rho)=1-\frac{\mu}{\rho}-\lambda \rho^2 , \nonumber \\
 & & \d\Sigma_0^2=\d z^2 , \label{NOCmetric}
\eeqn
and a coordinate range such that $V(\rho)>0$.

Taking $\bl=\pa_r$ and the orthonormal vectors
\beqn
 & & \bbm_{(2)}=A^{-1}V^{-1/2}(\rho)\pa_\tau , \qquad \bbm_{(3)}=A^{-1}V^{1/2}(\rho)\pa_\rho, \qquad \bbm_{(4)}=A^{-1}\rho^{-1}\pa_\theta, \nonumber \\ 
 & & \bbm_{(5)}=A^{-1}\rho^{-1}\sin\theta^{-1}\pa_\phi, \qquad \bbm_{(6)}=r^{-1}\pa_z , \qquad A=\lambda ur-1 ,
\eeqn
one finds
\be
 \rho_{22}=\rho_{33}=\rho_{44}=\rho_{55}=\frac{\lambda u}{\lambda ur-1} , \qquad \rho_{66}=\frac{1}{r} .
\ee

The only non-zero Weyl frame components read (recall definition \eqref{def_W})
\be
 W_{23}=W_{45}=\frac{1}{(\lambda ur-1)^2}\frac{\mu}{\rho^3} , \qquad W_{24}=W_{25}=W_{34}=W_{35}=-\frac{1}{(\lambda ur-1)^2}\frac{\mu}{2\rho^3} .
\ee

However, note  that since $\d\Sigma_0^2=\d z^2$ is (conformally) flat, it follows from section~\ref{subsubsec_additionalW} that metric \eqref{general_violating} with \eqref{NOCmetric} admits also other  mWANDs of the form  
$\bk=\frac{1}{r^2}[\pa_u-\frac{1}{2}r^2(\lambda+\gamma^2)\pa_r+\gamma\pa_z]$, where $\gamma$ is an arbitrary function.

It turns out that if $\gamma_{,r}=\gamma_{,u}=\gamma_{,z}=0$ the mWAND $\bk$ is geodesic, in which case it becomes twistfree iff $\gamma$ is a constant and it obeys the optical constraint iff $\gamma^2=|\lambda|$ (so that it is also twistfree), i.e.,
\be
	\bk=\frac{1}{r^2}\left[\partial_u-\frac{1}{2}(\lambda+|\lambda|)r^2\pa_r\pm\sqrt{|\lambda|}\pa_z\right] .
\ee
In this case the corresponding optical matrix is of the form $\{a,a,a,a,a\}$ for $\lambda>0$ and $\{a,a,a,a,0\}$ for $\lambda<0$ (in both cases $a\neq0$).

\subsection{An explicit example without additional mWANDs}

\label{subsubsec_ex2}

{It follows from section~\ref{subsubsec_additionalW} that in order for metric \eqref{general_violating} to admit only two double WANDs we need $n_\sigma\ge4$ and $n_\Sigma\ge4$, i.e., at least 10 spacetime dimensions. If we now take $\d\sigma_\lambda^2$ as in \eqref{NOCmetric} but} $\d\Sigma_0^2$ corresponding to the Riemannian version of 4d Schwarzschild, {by looking at the conditions of section~\ref{subsubsec_additionalW} it is easy to see that indeed} the {\em only} mWANDS are the $\bl$ and $\bn$ discussed above (eqs.~\eqref{l_ex} and \eqref{n_ex}). Such metric thus constitutes an example of an Einstein spacetime with {\em all double WANDs violating the optical constraint}. Recall also that such mWANDs \eqref{l_ex} and \eqref{n_ex} are geodesic and non-degenerate.

%
%

\end{document}